\documentclass[12pt, leqno]{article}


\usepackage{amsthm} \usepackage{amsmath} \usepackage{amsfonts}
\usepackage{amssymb} \usepackage{latexsym} \usepackage{enumerate}
\usepackage[dvips]{color}
 \usepackage{mathrsfs}
\usepackage[numbers]{natbib}


\theoremstyle{plain}%
\newtheorem{Theorem}{Theorem}[section] %
\newtheorem{Lemma}[Theorem]{Lemma}
\newtheorem{Proposition}[Theorem]{Proposition} %

\theoremstyle{definition}%
\newtheorem{Assumption}[Theorem]{Assumption}%
\newtheorem{Definition}[Theorem]{Definition}%
\newtheorem{Example}[Theorem]{Example} %

\theoremstyle{remark}%
\newtheorem{Remark}[Theorem]{Remark} %


\newcommand{\set}{\triangleq}

 %


\DeclareMathOperator*{\argmin}{arg\:min}


\newcommand{\envspace}{\vspace{2mm}}


\renewcommand{\mathcal}{\mathscr}

\renewcommand{\epsilon}{\varepsilon}

\newcommand{\esssup}{\operatorname*{\mathrm{ess\,sup}}}

\numberwithin{equation}{section}
\begin{document}

\title{Necessary and sufficient conditions in the problem of optimal
  investment with intermediate consumption}
\author{Oleksii Mostovyi\\
  Carnegie Mellon University,\\
  Department of  Mathematical Sciences,\\
  5000 Forbes Avenue, Pittsburgh, PA, 15213-3890, US \\
  (omostovy@andrew.cmu.edu)}

  \date{}
\maketitle

\begin{abstract}
  We consider a problem of optimal investment with
  intermediate consumption in the framework of an incomplete
  semimartingale model of a financial market.  We show that a
  {necessary} and {sufficient} condition for the validity of key
  assertions of the theory is that the value functions of the primal
  and dual problems are
  finite. 
\end{abstract}

\section{Introduction}
\label{sec:introduction}

A fundamental problem of mathematical finance is that of an investor
who wants to invest and consume in a way that maximizes his expected
utility.  The first results for continuous time models were obtained
by Merton \cite{Merton1, Merton2} in a Markovian setting via dynamic programming arguments.  An alternative martingale
approach was developed among others by Cox and Huang \cite{CoxHuang1,
  CoxHuang2}, Karatzas, Lehoczky and Shreve \cite{KLS}, and Karatzas
and Shreve \cite{KSmmf} for complete markets and by Karatzas,
Lehoczky, Shreve and Xu \cite{KLSX}, He and Pearson \cite{HePearson1,
  HePearson2}, Kramkov and Schachermayer \cite{KS, KS2003}, Karatzas
and \v Zitkovi\'c \cite{Karatzas-Zitkovic-2003}, and \v Zitkovi\'c
\cite{Zitkovic} in an incomplete case. The main focus here was to
establish conditions under which ``key'' results, such as the
existence of primal and dual optimizers, hold.

When the consumption occurs only at maturity and the utility function
is deterministic a necessary and sufficient condition has been obtained
in Kramkov and Schachermayer \cite{KS2003}. It is stated as the
finiteness of the dual value function.  In the case of intermediate
consumption and stochastic field utility, the latest sufficient
conditions are due to Karatzas and \v Zitkovi\'c
\cite{Karatzas-Zitkovic-2003} and \v Zitkovi\'c \cite{Zitkovic}. They
are formulated in the form of several regularity assumptions such as a
uniform asymptotic elasticity.

This paper obtains necessary and sufficient conditions in the general
framework of an incomplete financial model with a stochastic field utility
and intermediate consumption occurring according to some stochastic
clock.  As in~\cite{KS2003} we assume that the dual value function is
finite (from above).  Maybe surprisingly, the only other condition we
need is the finiteness of the primal value function (from below).
Note that the latter condition holds trivially in the setting
of~\cite{KS2003}.

The remainder of the paper is organized as follows. In Section
\ref{sec:main-results} we describe the model and state the main
results. Their proofs are given in Section \ref{pfOfProp1} and are
based on the abstract versions of the main theorems presented in Section
\ref{abstractVersionOfTheMainTheorem}.

\section{Main Results}
\label{sec:main-results}

A model of a security market consists of $(d+1)$ assets: one bond and
$d$ stocks. We assume that the bond is chosen as a num\'eraire and
 denote by $S=\left( S^i\right)_{1\leq i\leq d}$
the discounted price processes of
the stocks.
 We suppose that $S$ is a semimartingale on a complete stochastic
basis $(\Omega, \mathcal{F}, \left(\mathcal{F}_t\right)_{t\in [0,
  \infty)}, \mathbb{P})$ with an infinite time horizon, $\mathcal F_0$ is the
completion of the trivial $\sigma$-algebra.

Define a portfolio $\Pi$ as a triple $(x, H, c),$ where the constant
$x$ is an initial value, $H=\left(H^i\right)_{1\leq i \leq d}$ is a
predictable $S$-integrable process of stocks' quantities, and
$c=\left(c_t\right)_{t\geq 0}$ is a nonnegative and optional process that
specifies the consumption rate in the units of the bond.

Hereafter we fix a \textit{stochastic clock} $\kappa =
\left(\kappa_t\right)_{t\geq 0}$, which is a
non-decreasing, c\'adl\'ag, adapted process such that
\begin{equation}
  \label{stochasticClock}
\kappa_0 = 0, ~~ \mathbb P\left[\kappa_{\infty}>0 \right]>0,\text{ and } \kappa_{\infty}\leq A
\end{equation}
for some  finite constant $A$. Stochastic clock represents the notion of time
according to which consumption occurs.

The discounted value process $V=\left(V_t\right)_{t\geq
  0}$ of a portfolio $\Pi$ is defined as

\begin{equation}\label{defW}
   V_t \set x + \int_0^t H_udS_u - \int_0^tc_ud\kappa_u, \quad t\geq 0.
\end{equation}
A portfolio $\Pi$ with $c\equiv 0$ is called
\textit{self-financing}.  The collection of nonnegative value
processes of self-financing portfolios with initial value $1$ is
denoted by $\mathcal X$, i.e.,
\begin{equation}\nonumber
  \mathcal{X} \set \left\{ X\geq 0:~ X_t = 1 + \int_0^tH_udS_u, ~~t\geq 0
  \right\}.
\end{equation}
A pair $(H, c),$ such that for a given $x>0$ the corresponding value
process $V$ is nonnegative, is called an \textit{$x$-admissible
  strategy}.  If for a consumption process $c$ we can find a
predictable $S$-integrable process $H$ such that $(H, c)$ is an
$x$-admissible strategy, we say that $c$ is an \textit{$x$-admissible
  consumption process}.

The set of the $x$-admissible consumption
processes corresponding to a stochastic clock $\kappa$ is denoted by $\mathcal
A(x)$, that is,
\begin{equation}
  \label{defOfAcal}
    \mathcal A(x)\set \left\{c:~c~
      \text{is $x$-admissible}\right\},~~x>0.
\end{equation}
We write $\mathcal{A} \set \mathcal{A}(1)$ for brevity.


The set of \textit{equivalent martingale deflators} is defined as
\begin{equation}\label{defOfYtilde}
\begin{array}{c}
  \hspace{-14mm}\mathcal{Z}\set
  \left\{Z>0:~ Z {\rm~is~ a~ c\acute{a}dl\acute{a}g~martingale,~s.t.~}Z_0=1~{\rm and}\right.\\
  \hspace{11mm}\left. XZ=(X_tZ_t)_{t\geq
      0}{\rm~is~a~local~ martingale~for~every~}X\in\mathcal X\right\}.\\
\end{array}
\end{equation}
We assume that
\begin{equation}\label{ZisNotEmpty}
  \mathcal{Z} \neq \O.
\end{equation}
This condition is closely related to the absence of arbitrage
opportunities in the sense of~\cite{Karatzas-Kardaras}.


We now introduce an economic agent whose consumption preferences are
modeled with a \textit{utility stochastic field} $ U= U(t,\omega,x):
[0, \infty)\times\Omega\times[0, \infty)\to \mathbb{R}\cup\{-\infty\}$ satisfying the
conditions below.

\begin{Assumption}
  \label{Assumption1}
  For every $(t, \omega)\in[0, \infty)\times\Omega$ the function $x\to
  U(t, \omega, x)$ is strictly concave, increasing, continuously
  differentiable on $(0,\infty)$ and satisfies the Inada conditions:
  \begin{equation}\label{Inada}
    \lim\limits_{x\downarrow 0}U'(t, \omega, x) =
    +\infty \quad \text{and} \quad \lim\limits_{x\to
      \infty}U'(t, \omega, x) \set 0,
  \end{equation}
  where $U'$ denotes the partial derivative with respect to the third argument.
 At $x=0$ we have, by continuity, $U(t, \omega, 0) = \lim\limits_{x \downarrow 0}U(t, \omega, x)$, this value
 may be $-\infty$.
For every
  $x\geq 0$ the stochastic process $U\left( \cdot, \cdot, x \right)$ is
  optional.
\end{Assumption}

For a given initial capital $x>0$ the goal of the agent is to maximize
his expected utility. The value function of this problem is denoted by
\begin{equation}\label{primalProblem}
  u(x) \set \sup\limits_{c\in\mathcal{A}(x)} \mathbb{E}\left[
    \int_0^{\infty}U(t,\omega, c_t)d\kappa_t
  \right],\quad x>0.
\end{equation}
We use the convention
\begin{displaymath}
  \mathbb{E}\left[
    \int_0^{\infty}U(t,\omega, c_t)d\kappa_t \right] \set -\infty
  \quad \text{if} \quad \mathbb{E}\left[ \int_0^{\infty}U^{-}(t,\omega, c_t)d\kappa_t
  \right]= +\infty.
\end{displaymath}
Here and below, $W^{-}$ and $W^{+}$ denote the negative and the positive parts of a
stochastic field $W$, respectively.

Our goal is to find
conditions on the financial market and the utility field $U$ under
which the key conclusions of the utility maximization theory hold,
namely, $u$ satisfies the Inada conditions and the solution $\hat
c(x) \in\mathcal A(x)$ to~(\ref{primalProblem}) exists.


\begin{Remark}
For simplicity of notations we assume throughout the paper that the argument
$x$ in $U(t, \omega, x)$ represents the consumption in the discounted units,
that is, in the number of bonds. This does not restrict any
generality. Indeed, suppose that the investor's stochastic field utility is
given as $\tilde U = \tilde U(t, \omega, \tilde x)$, where the consumption $\tilde x$
is measured in the number of units of a different asset, whose discounted
value is given by a strictly positive semimartingale $A=(A_t)_{t\geq 0}$
.
Then we arrive to our framework by setting
\begin{displaymath}
 U(t, \omega, x) \set \tilde U\left(t,
    \omega, x/A_t(\omega)\right).
\end{displaymath}

\end{Remark}

To study~\eqref{primalProblem} we employ standard duality arguments as
in \cite{KS} and \cite{Zitkovic} and define the \textit{conjugate stochastic
field} $V$ to $U$ as
\begin{equation}
  \label{convexConjugate}
  V(t,\omega, y) \set \sup\limits_{x>0}\left( U(t,\omega, x) - xy \right)
  ,\quad
  \left(t, \omega, y \right) \in[0, \infty)\times\Omega\times[0,
  \infty).
\end{equation}
It is well-known that $-V$ satisfies Assumption~\ref{Assumption1}. We
also denote
\begin{equation}\label{defOfYkappa}
\begin{array}{c}
\hspace{-18mm}{\mathcal Y}(y) \set {\rm cl}\left\{Y: Y{\rm~is~c\grave adl\grave ag~adapted~and }
\right.\\
\hspace{30mm}\left. 0\leq Y\leq yZ ~\left(d\kappa\times\mathbb
    P\right){\rm~a.e.~for~some~}Z\in{\mathcal Z} \right\},
\end{array}
\end{equation}
where the closure is taken in the topology of convergence in measure $\left(d\kappa\times\mathbb
    P\right)$ on the space of real-valued optional processes. 
 We write ${\mathcal Y}\set{\mathcal Y}(1)$ for brevity.


After these preparations, we define the value function of the dual optimization
problem as
\begin{equation}\label{dualProblem}
  v(y) \set \inf\limits_{Y\in{\mathcal Y}(y)} \mathbb{E}\left[
    \int_0^{\infty}V(t, \omega, Y_t )d\kappa_t
  \right],\quad y>0,
\end{equation}
where we use the convention:
\begin{displaymath}
  \mathbb{E}\left[ \int_0^{\infty}V(t, \omega, Y_t
    )d\kappa_t \right] \set +\infty
  \quad \text{if} \quad \mathbb{E}\left[ \int_0^{\infty}V^{+}(t, \omega, Y_t )d\kappa_t
  \right] = +\infty.
\end{displaymath}
Theorems \ref{mainTheorem} and \ref{secondTheorem} constitute our main results.

\begin{Theorem}\label{mainTheorem}
  Assume that conditions~\eqref{stochasticClock} and (\ref{ZisNotEmpty}) and
  Assumption \ref{Assumption1} hold true and suppose
  \begin{equation}\label{mainCondition}
    v(y)<\infty~~ for~ all~y>0~~~ and~~~
    u(x) > -\infty~~ for~ all~x>0.
  \end{equation}
  Then we have:
  \begin{enumerate}
  \item $u(x)< \infty$ for all $x>0,$ $v(y)>-\infty$ for all $y>0.$
    The functions $u$ and $v$ are conjugate, i.e.,
    \begin{equation}\label{biconjugacy}
      \begin{array}{rcl}
        v(y) &=& \sup\limits_{x>0}\left(u(x) - xy\right),\quad y>0,\\
        u(x) &=& \inf\limits_{y>0}\left(v(y) + xy\right),\quad x>0.\\
      \end{array}
    \end{equation}
    The functions $u$ and $-v$ are continuously differentiable on $(0,
    \infty),$ strictly increasing, strictly concave and satisfy the
    Inada conditions:
    \begin{equation}\nonumber
      \begin{array}{lcr}
        u'(0) \set \lim\limits_{x\downarrow 0}u'(x) = +\infty, && -v'(0)
        \set \lim\limits_{y\downarrow 0}-v'(y) = +\infty,\\
        u'(\infty) \set \lim\limits_{x\to\infty}u'(x) = 0, && -v'(\infty) \set
        \lim\limits_{y\to\infty}-v'(y) = 0.\\
      \end{array}
    \end{equation}

  \item For every $x>0$ and $y>0$ the optimal solutions $\hat{c}(x)$ to
    (\ref{primalProblem}) and $\hat{Y}(y)$ to (\ref{dualProblem})
    exist and are unique. Moreover, if $y=u'(x)$ we have the dual
    relations
    \begin{displaymath}
      \hat{Y}_t(y) =  U'\left(t, \omega,  \hat{c}_t(x)
      \right),\quad t\geq 0,
    \end{displaymath}
    and
    \begin{displaymath}
      \mathbb{E}\left[ \int_0^{\infty} \hat{c}_t(x)
        \hat{Y}_t(y)d\kappa_t \right]=xy.
    \end{displaymath}

\end{enumerate}

\end{Theorem}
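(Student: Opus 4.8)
The plan is to recast Theorem~\ref{mainTheorem} as an instance of an abstract duality theorem on a suitable finite measure space and then to verify, in the concrete market model, the structural hypotheses that the abstract result demands. I would work on the product space $\bar\Omega \set \Omega\times[0,\infty)$ endowed with the optional $\sigma$-algebra and the measure $\mu \set \P\otimes d\kappa$; condition~\eqref{stochasticClock} ensures that $\mu$ is finite and nonzero, which is exactly what permits one to view consumption rates and deflators as elements of $L^0_+(\mu)$ paired through $\langle c,Y\rangle \set \E\left[\int_0^\infty c_tY_t\,d\kappa_t\right] = \int_{\bar\Omega}cY\,d\mu$. Under this identification the primal and dual objectives become $\E_\mu[U(g)]$ and $\E_\mu[V(h)]$, so that~\eqref{primalProblem} and~\eqref{dualProblem} acquire the abstract form $u(x)=\sup_{g\in x\mathcal{C}}\E_\mu[U(g)]$ and $v(y)=\inf_{h\in y\mathcal{D}}\E_\mu[V(h)]$, where $\mathcal{C}$ is the image of $\mathcal{A}(1)$ and $\mathcal{D}$ that of $\mathcal{Y}(1)$.

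The heart of the argument, and the step I expect to be the main obstacle, is to establish that $\mathcal{C}$ and $\mathcal{D}$ form a polar pair of the kind the abstract theorem requires: both are convex, solid, and closed in $L^0_+(\mu)$, and they satisfy the bipolar relations
\[
\mathcal{C}=\left\{g\in L^0_+(\mu):\ \E_\mu[gh]\le 1\ \forall h\in\mathcal{D}\right\},\qquad \mathcal{D}=\left\{h\in L^0_+(\mu):\ \E_\mu[gh]\le 1\ \forall g\in\mathcal{C}\right\}.
\]
Solidity and convexity are immediate from the definitions, closedness of $\mathcal{D}$ is built into~\eqref{defOfYkappa} through the closure operation, and closedness of $\mathcal{C}$ follows from a Fatou-type argument. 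One inclusion in each identity is the easy direction: for $c\in\mathcal{A}(1)$ the deflated value-plus-consumption process $V_tZ_t+\int_0^t c_uZ_u\,d\kappa_u$ is a nonnegative supermartingale started at $1$, whence $\E\left[\int_0^\infty c_tY_t\,d\kappa_t\right]\le 1$ for every $Y$ dominated by some $Z\in\mathcal{Z}$, and the closure in~\eqref{defOfYkappa} preserves this bound. The reverse inclusions are the substantive ones and amount to a superreplication, or optional decomposition, statement for consumption streams: any density in the polar of $\mathcal{D}$ can be financed by some $S$-integrable $H$ with a nonnegative value process. It is here that~\eqref{ZisNotEmpty} and the bipolar theorem of Brannath--Schachermayer are invoked, and here that the definition of $\mathcal{Y}$ as a closure is essential in order to obtain a genuinely closed dual domain.

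With the polarity in hand the proof concludes by direct application. Assumption~\ref{Assumption1} furnishes the properties of the random field $U$ and of its conjugate $-V$, while the standing hypothesis~\eqref{mainCondition} supplies precisely the two finiteness conditions $v(y)<\infty$ and $u(x)>-\infty$ under which the abstract theorems of Section~\ref{abstractVersionOfTheMainTheorem} apply. Those theorems yield all the assertions of part~(1): the finiteness $u(x)<\infty$ and $v(y)>-\infty$, the conjugacy~\eqref{biconjugacy}, and the continuous differentiability, strict monotonicity, strict concavity, and Inada behaviour of $u$ and $-v$. They also produce unique abstract optimizers $\hat g(x)$ and $\hat h(y)$ together with the pointwise identity $\hat h(y)=U'(\hat g(x))$ and the budget equality $\E_\mu[\hat g(x)\hat h(y)]=xy$ whenever $y=u'(x)$. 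It then remains only to transport these back to the market: $\hat g(x)$ is the density of the sought optimizer $\hat c(x)\in\mathcal{A}(x)$, membership in $\mathcal{C}$ guaranteeing the existence of an associated $S$-integrable $H$, and $\hat h(y)$ is identified with the optional process $\hat Y(y)\in\mathcal{Y}(y)$, so that the two displayed relations of part~(2) are exactly the translations of $\hat h(y)=U'(\hat g(x))$ and $\E_\mu[\hat g(x)\hat h(y)]=xy$.
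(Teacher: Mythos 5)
Your proposal follows essentially the same route as the paper: pass to the product space with measure $d\kappa\times\P$, prove that $\mathcal{A}$ and $\mathcal{Y}$ form a convex, solid, closed polar pair (the paper's Proposition~\ref{prop1}, established via the supermartingale/integration-by-parts bound in one direction and optional decomposition plus the Brannath--Schachermayer bipolar theorem in the other), and then invoke the abstract Theorem~\ref{mainTheorem2}. The only item you leave implicit is the verification of condition~(\ref{positivityCD}) --- that $\mathcal{A}$ and $\mathcal{Y}$ each contain a strictly positive element, which follows from $\mathbf{1}\in\mathcal{X}$ and~(\ref{ZisNotEmpty}) --- but this is a minor omission and the argument is otherwise the paper's own.
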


The finiteness conditions (\ref{mainCondition}) are
clearly necessary for the conclusions of either item 1 or 2. Notice that the condition $u(x) >-\infty$ for all $x>0$ holds
trivially if the utility stochastic field $U$ is uniformly bounded from
below by a real-valued function.

A natural question
is whether one can use the set $\mathcal Z$ instead of $\mathcal Y$ as the
dual domain and still obtain the same value function $v$.
 Theorem \ref{secondTheorem} below states that the answer is positive, however, the minimizer might lie outside of the set
$\mathcal Z$  in general, see e.g. Example 5.1 in Kramkov and Schachermayer
\cite{KS}.
Furthermore, due to a
certain \textit{symmetry} between primal and dual problems (that is explored in more
detail in Section \ref{abstractVersionOfTheMainTheorem}) a similar conclusion
is valid for the value function $u$.
Let ${\mathcal B}$ be a subset of $\mathcal A$ such that
\begin{itemize}
\item [(i)]
for every $Y\in\mathcal Y$, we have
\begin{displaymath}
\sup\limits_{c\in{\mathcal B}}\mathbb{E}\left[ \int_0^{\infty}c_tY_td\kappa_t  \right]=
\sup\limits_{c\in\mathcal{A}} \mathbb{E}\left[ \int_0^{\infty}c_tY_td\kappa_t \right],
\end{displaymath}
\item [(ii)]
the set ${\mathcal B}$
is closed under the countable convex combinations, that is, for any sequence
$\left(c^n\right)_{n\geq 1}$ of optional processes in ${\mathcal B}$ and any sequence of positive numbers $(a^n)_{n\geq 1}$ such that
$\sum_{n=1}^{\infty}a^n=1$, the process
$\sum_{n=1}^{\infty}a^nc^n$ belongs to ${\mathcal B}$.

\end{itemize}
Observe that $\mathcal Z$ is closed under the countable convex combinations.
\begin{Theorem}\label{secondTheorem}
Under the conditions of Theorem \ref{mainTheorem}, we have
    \begin{displaymath}
\begin{array}{rclc}
      v(y)& =& \inf\limits_{Z\in\mathcal{Z}}\mathbb{E}\left[ \int_0^{\infty}
        V\left(t, \omega, yZ_t\right)d\kappa_t\right],& y>0,\vspace{2mm}\\
  u(x) &= &\sup\limits_{c\in{\mathcal B}} \mathbb{E}\left[
    \int_0^{\infty}U(t,\omega, xc_t)d\kappa_t
  \right],& x>0.
\end{array}
\end{displaymath}

\end{Theorem}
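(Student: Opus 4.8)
The plan is to reduce both identities to monotonicity and convexity/concavity properties of the two optimization functionals, after recording the structure of the dual domain. Throughout I would regard consumption processes and deflators as elements of $L^0_+(d\kappa\times\mathbb{P})$ paired by $\langle c,Y\rangle\set\mathbb{E}\big[\int_0^\infty c_tY_t\,d\kappa_t\big]$. The basic observations are: first, by \eqref{defOfYkappa} the set $\mathcal{Y}(y)$ is the closure in convergence in $d\kappa\times\mathbb{P}$-measure of
\[
\mathcal{Y}_0(y)\set\{Y:\ 0\le Y\le yZ\ (d\kappa\times\mathbb{P})\text{-a.e.\ for some }Z\in\mathcal{Z}\},
\]
and, since $\mathcal{Z}$ is convex, $\mathcal{Y}_0(y)$ is convex and solid; second, since $-V$ satisfies Assumption \ref{Assumption1}, the field $V$ is decreasing and convex in its last argument, so $J(Y)\set\mathbb{E}\big[\int_0^\infty V(t,\omega,Y_t)\,d\kappa_t\big]$ is a convex, monotone decreasing functional. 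By rescaling one also has $\mathcal{A}(x)=x\mathcal{A}$, hence $u(x)=\sup_{c\in\mathcal{A}}\mathbb{E}\big[\int_0^\infty U(t,\omega,xc_t)\,d\kappa_t\big]$.

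For the dual identity, the inequality $v(y)\le\inf_{Z\in\mathcal{Z}}J(yZ)$ is immediate from $\{yZ:Z\in\mathcal{Z}\}\subseteq\mathcal{Y}(y)$. For the reverse I would proceed in two steps. The first is monotonicity: every $Y\in\mathcal{Y}_0(y)$ is dominated by some $yZ$, so $J(Y)\ge J(yZ)\ge\inf_{Z'\in\mathcal{Z}}J(yZ')$, giving $\inf_{\mathcal{Y}_0(y)}J=\inf_{Z\in\mathcal{Z}}J(yZ)$. The second, and delicate, step is to show that passing from $\mathcal{Y}_0(y)$ to its measure-closure $\mathcal{Y}(y)$ does not lower the infimum, i.e. $\inf_{\mathcal{Y}(y)}J=\inf_{\mathcal{Y}_0(y)}J$. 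Here I would use that $J$ is convex and lower semicontinuous for convergence in measure (lower semicontinuity follows from Fatou's lemma together with the bound $V(t,\omega,y)\ge U(t,\omega,1)-y$ and $\kappa_\infty\le A$), and that $\mathcal{Y}_0(y)$ is convex; equivalently, one may invoke the dual optimizer $\hat Y(y)$ from Theorem \ref{mainTheorem} and verify that it already lies in the solid set $\mathcal{Y}_0(y)$ (is dominated by a single $yZ$), which yields $v(y)=J(\hat Y(y))\ge\inf_{Z}J(yZ)$ directly.

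The primal identity I would obtain from the symmetry with the dual problem. Since $\mathcal{B}\subseteq\mathcal{A}$, the inequality $\sup_{c\in\mathcal{B}}\le u(x)$ is clear. For the reverse, condition (i) asserts that $\mathcal{B}$ and $\mathcal{A}$ have the same support functional against every $Y\in\mathcal{Y}$; by the bipolar theorem for $L^0_+(d\kappa\times\mathbb{P})$ this forces $\mathcal{B}$ and $\mathcal{A}$ to have the same closed, convex, solid hull $\mathcal{C}$. Mirroring the dual argument with $U$ increasing and concave in place of $V$ decreasing and convex, monotonicity reduces the supremum over the solid hull to the generators, convexity of $\mathcal{A}$ handles convex combinations, and a closure argument gives $\sup_{\mathcal{A}}=\sup_{\mathcal{C}}$. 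The point is then to show $\sup_{\mathcal{B}}=\sup_{\mathcal{C}}$: this is where hypothesis (ii) is essential, since $\mathcal{B}$ is not assumed solid or measure-closed, so one must reach elements of $\mathcal{C}$ from within $\mathcal{B}$ without leaving it.

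The main obstacle in both parts is exactly this passage to the measure-closure, where the naive semicontinuity goes the wrong way and the relevant infimum/supremum could a priori jump. In the dual problem convexity of $J$ (or the location of the optimizer in $\mathcal{Y}_0(y)$) resolves it. In the primal problem the difficulty is sharper: given a sequence in $\mathcal{B}$ converging in measure, I would apply Komlós' lemma to extract forward convex combinations converging $d\kappa\times\mathbb{P}$-a.e., and then use closedness of $\mathcal{B}$ under countable convex combinations (condition (ii)) to keep these combinations inside $\mathcal{B}$; the finiteness conditions \eqref{mainCondition}, via Fatou's lemma and $\kappa_\infty\le A$, ensure that the utility integrals pass to the limit so that the supremum is preserved. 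Establishing this stability under the measure-closure, uniformly enough to transfer the extremal values, is the step I expect to require the most care.
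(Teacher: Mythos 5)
You have correctly located the crux --- the passage from the generators ($y\mathcal{Z}$, resp.\ $\mathcal{B}$) to the measure-closure --- but neither of the two mechanisms you offer to cross that gap actually works, and this is precisely the step the paper spends two lemmas and a bespoke construction on. Your ``equivalent'' shortcut for the dual side, namely that the optimizer $\hat Y(y)$ already lies in the solid hull $\mathcal{Y}_0(y)$, is refutable: if $\hat Y(y)\le yZ$ $(d\kappa\times\P)$-a.e.\ for a single $Z\in\mathcal{Z}$, then since $V$ is decreasing one gets $\int V(yZ)\,d\mu\le\int V(\hat Y(y))\,d\mu=v(y)$, so $yZ$ is also a minimizer, and uniqueness of the dual optimizer forces $\hat Y(y)=yZ\in y\mathcal{Z}$ --- which the paper explicitly warns can fail (the minimizer may lie outside $\mathcal{Z}$, cf.\ Example 5.1 of Kramkov--Schachermayer). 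Your other mechanism, convexity plus lower semicontinuity of $J$, gives only $J(Y)\le\liminf_n J(Y^n)$ for approximating sequences, which is the wrong inequality; and the usual convexity trick of sliding along a segment from $Y$ into $\mathcal{Y}_0(y)$ fails because a convex combination of $Y\in\mathcal{Y}(y)\setminus\mathcal{Y}_0(y)$ with an element of $\mathcal{Y}_0(y)$ need not lie in $\mathcal{Y}_0(y)$. So the dual identity is not proved.

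What the paper actually does (in the abstract Theorem \ref{secondTheorem2}) is: (a) use the Brannath--Schachermayer bipolar theorem plus optimality to produce $\theta^n\in\tilde{\mathcal{D}}$ with $y\theta^n\to\hat\eta(y)$ a.e.\ (Lemma \ref{Lemma2}); (b) build, via a \emph{countable} convex combination $\zeta=\sum_n a^n\eta^n$ and a careful choice of $\delta_n$, an element $\zeta\in\tilde{\mathcal{D}}$ with $\int V(\epsilon\zeta)\,d\mu<\infty$ (Lemma \ref{finitenessOverTildeY} --- this is where closedness under countable convex combinations is really used, not merely to ``keep Koml\'os limits inside $\mathcal{B}$''); and (c) pass to the mixture $\eta=(y\theta+\epsilon\zeta)/(y+\epsilon)$ so that monotonicity of $V$ controls the integral on the small exceptional set where $V(y\theta)$ may blow up (recall $V(\cdot,\cdot,z)\uparrow+\infty$ as $z\downarrow 0$), then let $\epsilon\downarrow 0$ using continuity of $v$. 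Your primal argument has the same hole: a.e.\ convergence $xc^n\to\hat c(x)$ plus Fatou requires a uniform integrable \emph{lower} bound on $U(xc^n)$ (the symmetric Lemma \ref{uiOfVminus} controls $U^{+}$, i.e.\ again the wrong side), and it is the $\epsilon$-mixture with an element of finite utility that supplies it. Without steps (b) and (c) --- or a substitute for them --- the proposal does not close.
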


The proofs of Theorems \ref{mainTheorem} and \ref{secondTheorem} will be given in
Section~\ref{pfOfProp1}  and will rely on Theorems~\ref{mainTheorem2} and
\ref{secondTheorem2}, which are the ``abstract'' versions of
Theorems~\ref{mainTheorem} and \ref{secondTheorem}, respectively. We conclude this
section with examples of the investment problems (see e.g. Karatzas \cite{Kar89} as well as Karatzas and Shreve
\cite{KSmmf}) that are included in our formulation.
Hereafter, $1_E$ denotes the indicator function of a set $E$.

\begin{Example}
 Maximization of the expected utility from
consumption:
\begin{equation}\nonumber
  u(x) = \sup\limits_{c\in\mathcal{A}(x)} \mathbb{E}\left[
    \int_0^TU(t,\omega,  c_t)dt\right].
\end{equation}
Here the clock $\kappa$ is given by
\begin{equation}\nonumber
  \kappa(t) \set \min\left(t,~T\right),~~ t\geq 0.
\end{equation}
\end{Example}

\begin{Example}Maximization of the expected utility from
consumption and terminal wealth:
\begin{equation}\label{4145}
  u(x) = \sup\limits_{c\in\mathcal{A}(x)} \mathbb{E}\left[
    \int_0^TU_1(t,\omega,  c_t)dt + U_2(\omega, c_T)\right].
\end{equation}
Here the clock $\kappa$ is given by
\begin{equation}\nonumber
  \kappa(t) \set t1_{[0, T)}(t) + (T+1)1_{[T,
    \infty)}(t),~~t\geq  0.
\end{equation}
\end{Example}

\begin{Example}
Maximization of the expected utility from terminal wealth:
\begin{equation}\label{terminalWealth}
  u(x) = \sup\limits_{X\in \mathcal X} \mathbb{E}\left[
    U(\omega, xX_T)\right],
\end{equation}
The corresponding clock process is
\begin{equation}\nonumber
  \kappa(t) \set 1_{[T, \infty)}(t),~~t\geq 0.
\end{equation}
Note that the formulation (\ref{terminalWealth}) extends the framework
of Kramkov and Schachermayer (see \cite{KS, KS2003}) to stochastic
utility.
\end{Example}

\begin{Example}
Maximization of the expected utility from
consumption over the infinite time horizon, that is
\begin{equation}\label{infiniteHorizon}
  u(x) = \sup\limits_{c\in\mathcal A(x)}\mathbb E\left[
    \int_0^{\infty} e^{-\nu t}U(t, \omega, c_t)dt
  \right], \quad x>0,~\nu >0,
\end{equation}
where the clock is defined as
\begin{displaymath}
\kappa(t)\set \int_0^t e^{-\nu s}ds = \frac{1}{\nu}\left(1 - e^{-\nu
    t}\right),~~t\geq 0.
\end{displaymath}
\end{Example}

\begin{Example}
Maximization of expected utility from consumption
occurring at discrete times $(t_1,\dots,t_N)$:
\begin{equation}\label{descreteTimes}
  u(x) = \sup\limits_{c\in\mathcal A(x)}\mathbb E\left[
    \sum\limits_{j=1}^NU(t_j, \omega, c_{t_j})
  \right], \quad x>0.
\end{equation}
Here the clock process is
\begin{equation}\nonumber
\kappa(t) \set \sum\limits_{j=1}^N 1_{[t_j,+\infty)}(t),~~t\geq 0.
\end{equation}
\end{Example}

\section{Abstract versions of the main theorems}\label{abstractVersionOfTheMainTheorem}

Let $\mu$ be a finite and positive measure on a measurable space  $(\Omega,
\mathcal F)$. Denote by
$\mathbf{L}^0=\mathbf{L}^0\left(
  \Omega, \mathcal F, \mu\right)$ the vector space of (equivalence classes of)
real-valued measurable functions on
$(\Omega, \mathcal F, \mu)$ topologized by convergence in measure $\mu$.
Let $\mathbf{L}^0_{+}$ denote its positive
orthant, i.e.,
\begin{displaymath}
\mathbf{L}^0_{+} = \left\{ \xi\in\mathbf{L}^0\left(
  \Omega, \mathcal F, \mu\right):~\xi \geq 0\right\}.
\end{displaymath}
For any $\xi$ and $\eta$ in $\mathbf{L}^0$ we write
\begin{displaymath}
\langle \xi, \eta \rangle \set \int_{\Omega} \xi \eta d\mu,
\end{displaymath}
whenever the latter integral is well-defined. Let $\mathcal{C}, \mathcal{D}$ be  subsets of
$\mathbf{L}^0_{+}$ that satisfy the conditions below.

\begin{enumerate}
\item We have
  \begin{equation}\label{polarityCD}
    \begin{array}{rcl}
      \xi\in\mathcal{C} &\Leftrightarrow& \langle \xi, \eta \rangle \leq 1~{\rm
        for~all~}\eta \in\mathcal{D},\\
      \eta\in\mathcal{D} &\Leftrightarrow& \langle \xi, \eta\rangle \leq 1~{\rm
        for~all~}\xi \in\mathcal{C}.\\
    \end{array}
  \end{equation}

\item $\mathcal{C}$ and $\mathcal{D}$ contain at least one
  strictly positive element:
\begin{equation}\label{positivityCD}
{\rm there~are~~}\xi\in\mathcal{C},~ \eta\in\mathcal{D} {\rm~~such~that~~}\min(\xi,~\eta)>0~~\mu~a.e.
\end{equation}
\end{enumerate}
Observe that our construction of the abstract sets $\mathcal C$ and $\mathcal D$ is
similar to the one in \cite{KS}, however we
do not require a constant to be an element of $\mathcal C$. This leads to a \textit{symmetry} between the sets $\mathcal C$ and $\mathcal D$ that
plays an important role in the proofs. Also notice
that $\mathcal C$ and $\mathcal D$ are convex and bounded in $\mathbf{L}^0\left(\mu\right)$.
For $x>0$ and $y>0$ we define the sets:
\begin{equation}\label{defCxDy}
  \begin{array}{rcrcl}
    \mathcal{C}(x) &\set &x\mathcal{C} &\set&\left\{ x\xi: ~\xi\in\mathcal{C}\right\}, \\
    \mathcal{D}(y) &\set &y\mathcal{D} &\set&\left\{ y\eta: ~\eta\in\mathcal{D}\right\}. \\
  \end{array}
\end{equation}

Consider a \textit{stochastic utility function}
$U$: $\Omega\times[0, \infty)\to \mathbb{R}\cup\{-\infty\}$, which satisfies
the following conditions.
\begin{Assumption}\label{Assumption2}
  For every $ \omega\in \Omega$ the function $x\to U( \omega, x)$ is
  strictly concave, increasing, continuously differentiable on $(0,\infty)$, and
  satisfies the Inada conditions:
  \begin{equation}\label{InadaAbstract}
    \lim\limits_{x\downarrow 0}U'(\omega, x) =
    +\infty \hspace{3mm}{\rm and}\hspace{3mm} \lim\limits_{x\to
      \infty}U'(\omega, x) = 0,
  \end{equation}
 \looseness+1
where $U'(\cdot, \cdot)$ denotes the partial derivative with respect
  to the second argument. At $x=0$ we have, by continuity, $U(\omega, 0) = \lim\limits_{x\downarrow
    0}U(\omega, x)$, this value may be $-\infty$. For every $x\geq0$ the function $U\left(
    \cdot, x \right)$ is measurable.
\end{Assumption}
Define the \textit{conjugate function} $V$ to $U$ as
\begin{equation}\nonumber
  V(\omega, y) \set \sup\limits_{x>0}\left( U(\omega, x) - xy \right)
  ,\quad \left(\omega, y \right) \in \Omega\times[0,
  \infty).
\end{equation}
Observe that $-V$ satisfies Assumption \ref{Assumption2}.
For a function  $W$ on $\Omega\times[0,
\infty)$ and a function $\xi\in\mathbf L^0_{+}$  we will write $W(\xi) \set
W(\omega, \xi(\omega))$. Recall that
$W^{+}$ and $W^{-}$ denote the positive and the negative parts of $W$, respectively.


Now we can state the optimization problems:
\begin{equation}\label{primalProblem2}
  u(x) = \sup\limits_{\xi\in\mathcal{C}(x)}\int_{\Omega} {U}(\xi)d\mu, \hspace{5mm}x>0,
\end{equation}
\begin{equation}\label{dualProblem2}
  v(y) = \inf\limits_{\eta\in\mathcal{D}(y)}\int_{\Omega} {V}(\eta)d\mu, \hspace{5mm}y>0,
\end{equation}
where we used the convention:
\begin{displaymath}
\begin{array}{rcl}
\int_{\Omega} U(\xi)d\mu \set -\infty&
 {\rm if} &
\int_{\Omega} U^{-}(\xi)d\mu = +\infty,\\
\int_{\Omega} V(\eta)d\mu \set +\infty & {\rm if}&
\int_{\Omega} V^{+}(\eta)d\mu = +\infty.\\
\end{array}
\end{displaymath}
The following theorem is an abstract version of Theorem \ref{mainTheorem}.




\begin{Theorem}\label{mainTheorem2}
  Assume that $\mathcal{C}$ and $\mathcal{D}$ satisfy conditions
  (\ref{polarityCD}) and (\ref{positivityCD}). Let Assumption
  \ref{Assumption2} hold and suppose
  \begin{equation}\label{mainCondition2}
    v(y)<\infty~~for~ all~y>0\quad and\quad u(x) > -\infty~~for~ all~x>0.
  \end{equation}
  Then we have:
  \begin{enumerate}
  \item $u(x)< \infty$ for all $x>0,$ $v(y)>-\infty$ for all $y>0.$
    The functions $u$ and $v$ satisfy the biconjugacy relations, i.e.,
    \begin{equation}\label{biconjugacy2}
      \begin{array}{rcl}
        v(y) &=& \sup\limits_{x>0}\left(u(x) - xy\right),\quad y>0,\\
        u(x) &=& \inf\limits_{y>0}\left(v(y) + xy\right),\quad x>0.\\
      \end{array}
    \end{equation}
    The functions $u$ and $-v$ are continuously differentiable on $(0,
    \infty)$, strictly increasing, strictly concave, and satisfy the Inada
    conditions:
    \begin{equation}\nonumber
      \begin{array}{lcr}
        u'(0) \set \lim\limits_{x\downarrow 0}u'(x) = +\infty, && -v'(0) \set \lim\limits_{y\downarrow 0}-v'(y) = +\infty,\\
        u'(\infty) \set \lim\limits_{x\to\infty}u'(x) = 0, && -v'(\infty)
        \set
        \lim\limits_{y\to\infty}-v'(y) = 0.\\
      \end{array}
    \end{equation}

  \item For every $x>0$ the optimal solution $\hat{\xi}(x)$ to (\ref{primalProblem2})
    exists  and is
    unique. For every $y>0$ the optimal solution $\hat{\eta}(y)$ to
    (\ref{dualProblem2})  exists and is unique. If $y=u'(x)$, we have the dual
    relations
    \begin{displaymath}
        \hat{\eta}(y) = {U}'\left(\hat{\xi}(x) \right)
        \quad\mu {\rm~a.e.}
    \end{displaymath}
and
    \begin{displaymath}
      \langle \hat{\xi}(x),\hat{\eta}(y)\rangle = xy.
      \end{displaymath}

 \end{enumerate}

\end{Theorem}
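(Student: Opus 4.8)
The plan is to follow the abstract duality scheme of Kramkov and Schachermayer~\cite{KS2003}, organized so as to exploit the symmetry between $\cC$ and $\cD$. The starting point is \emph{weak duality}. For $\xi\in\cC(x)$ and $\eta\in\cD(y)$ the polarity relation~\eqref{polarityCD} gives $\ip{\xi}{\eta}\le xy$ (write $\xi=x\xi_0$, $\eta=y\eta_0$ with $\xi_0\in\cC$, $\eta_0\in\cD$), while Fenchel's inequality yields the pointwise bound $U(\xi)\le V(\eta)+\xi\eta$. Integrating and optimizing over the two domains produces $u(x)\le v(y)+xy$ for all $x,y>0$. In particular $u(x)<\infty$ follows from $v(y)<\infty$, and $v(y)>-\infty$ follows from $u(x)>-\infty$, which is the finiteness half of item~1. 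Concavity of $u$ and convexity of $v$ are immediate from the convexity of $\cC(x)$, $\cD(y)$ and of the maps $x\mapsto\cC(x)$, $y\mapsto\cD(y)$, together with the concavity of $U$ and convexity of $V$.

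Next I would establish existence and uniqueness of the optimizers, which is the technical core. By~\eqref{polarityCD} the sets $\cC$ and $\cD$ are closed, convex and bounded in $\mathbf L^0(\mu)$; hence any maximizing (resp.\ minimizing) sequence admits, after passing to forward convex combinations in the spirit of Koml\'os's lemma, an a.e.-convergent sequence whose limit stays in $\cC(x)$ (resp.\ $\cD(y)$) and does not worsen the objective, by concavity (resp.\ convexity) of the integrand. For the dual problem lower semicontinuity follows from Fatou's lemma applied to the convex integrand $V$; for the primal problem I would obtain upper semicontinuity by dominating $U^+(\xi)$ through $U(\xi)\le V(\eta)+\xi\eta$ and the uniform bound $\ip{\xi}{\eta}\le xy$, using a strictly positive $\eta\in\cD$ with $\int_\Omega V^+(\eta)\,d\mu<\infty$ (obtained by combining the strictly positive element of~\eqref{positivityCD} with a near-optimal dual competitor, available since $v(y)<\infty$), so that reverse Fatou applies. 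Uniqueness in both problems follows from the strict concavity of $U$ and the strict convexity of $V$.

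The heart of the biconjugacy and of the dual relations is a variational argument combined with the polarity~\eqref{polarityCD}. Fixing $y>0$ and the dual minimizer $\hat\eta(y)$, the first-order condition for minimality over the convex set $\cD(y)$ reads $\ip{-V'(\hat\eta(y))}{\eta'}\le\ip{-V'(\hat\eta(y))}{\hat\eta(y)}$ for every $\eta'\in\cD(y)$. Setting $\hat\xi:=-V'(\hat\eta(y))=(U')^{-1}(\hat\eta(y))\ge 0$ and $x:=\tfrac1y\ip{\hat\xi}{\hat\eta(y)}$, this inequality says precisely that $\sup_{\eta_0\in\cD}\ip{\hat\xi/x}{\eta_0}\le 1$, so~\eqref{polarityCD} forces $\hat\xi\in\cC(x)$ and $\ip{\hat\xi}{\hat\eta(y)}=xy$. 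Since equality holds in Fenchel's inequality for the conjugate pair, $U(\hat\xi)=V(\hat\eta(y))+\hat\xi\hat\eta(y)$ pointwise; integrating gives $\int_\Omega U(\hat\xi)\,d\mu=v(y)+xy$, which with weak duality shows $u(x)=v(y)+xy$, so that $\hat\xi=\hat\xi(x)$ is primal-optimal, $y=u'(x)$, and the dual relations of item~2 hold. This also yields $u(x)=\inf_{y>0}\bigl(v(y)+xy\bigr)$ for every $x$ in the range of the correspondence $y\mapsto x$.

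Finally I would record the regularity of $u$ and $-v$. Uniqueness of optimizers translates into strict convexity of $v$ and strict concavity of $u$, whence, by conjugacy, continuous differentiability of $-v$ and of $u$; the Inada conditions for $u$ and $-v$ are inherited from those of $U$ and $V$ through~\eqref{biconjugacy2} and ensure in particular that the correspondence $y\mapsto x$ above is onto $(0,\infty)$, so that both relations in~\eqref{biconjugacy2} hold for all arguments. The symmetry noted after~\eqref{positivityCD} --- namely that $-V$ satisfies Assumption~\ref{Assumption2} and that $\cC,\cD$ enter~\eqref{polarityCD} interchangeably --- lets me transfer every statement proved for the primal pair $(\cC,U)$ to the dual pair $(\cD,-V)$, so the properties of $v$ need not be re-derived from scratch. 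I expect the main obstacle to be the optimizer-existence step together with the passage from the dual variational inequality to the membership $\hat\xi\in\cC(x)$: both require careful control of integrability (finiteness of $\ip{\hat\xi}{\hat\eta(y)}$) and the correct use of the bipolar structure~\eqref{polarityCD}, and the apparent circularity between the Inada properties of $v$ and the surjectivity of $y\mapsto x$ must be resolved by first settling all properties of the dual value function and only then transferring them to $u$.
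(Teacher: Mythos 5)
Your overall architecture (weak duality, Koml\'os-type compactness for existence, regularity of $u$ and $v$ via Rockafellar) parallels the paper's, and the weak-duality and uniqueness steps are fine. The first genuine gap is in the existence of optimizers. You invoke Fatou's lemma for the dual problem and a ``reverse Fatou'' with domination for the primal one, but $V$ is unbounded below and $U$ is unbounded above, so both passages to the limit require a uniform integrability statement that you never establish. The domination $U^{+}(\xi)\leq V^{+}(\eta)+\xi\eta$ together with $\langle \xi,\eta\rangle\leq xy$ gives only a uniform bound on $\int_{\Omega}U^{+}(\xi)\,d\mu$ over $\xi\in\mathcal{C}(x)$, not uniform integrability of the family $\left(U^{+}(\xi)\right)_{\xi\in\mathcal{C}(x)}$: mass can still concentrate and reverse Fatou fails. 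The paper fills this with Lemma \ref{uiOfVminus} (and its symmetric primal counterpart), a nontrivial contradiction argument that sums disjointly supported pieces of $V^{-}$ along a sequence in $\mathcal{D}(ny)$ and contradicts the growth bound $\liminf_{y\to\infty}v(y)/y\geq 0$ obtained from weak duality. Some argument of this kind is indispensable.

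The second, more serious, gap is your route to biconjugacy via the first-order variational condition at the dual optimizer (set $\hat{\xi}:=-V'(\hat{\eta}(y))$ and let polarity force $\hat{\xi}\in\mathcal{C}(x)$). Under only the finiteness hypotheses (\ref{mainCondition2}) this is precisely the step that does not come for free: (a) you need $\hat{\eta}(y)>0$ $\mu$-a.e.\ for $\hat{\xi}$ to be finite-valued, which follows from nothing you have proved at that stage; (b) the inequality $\langle -V'(\hat{\eta}),\eta'-\hat{\eta}\rangle\leq 0$ requires differentiating $\epsilon\mapsto\int_{\Omega}V(\hat{\eta}+\epsilon(\eta'-\hat{\eta}))\,d\mu$ under the integral, which needs integrability estimates (finiteness of $\int_{\Omega}\hat{\xi}\hat{\eta}\,d\mu$ and of $\int_{\Omega}V^{+}((1-\epsilon)\hat{\eta})\,d\mu$, and a treatment of directions $\eta'$ with $\int_{\Omega}V(\eta')\,d\mu=+\infty$) that you flag as ``obstacles'' but do not resolve; historically this is where asymptotic elasticity entered. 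The paper deliberately avoids the variational argument: Lemma \ref{conjugacy} proves $v(y)=\sup_{x>0}(u(x)-xy)$ directly by a minimax argument over the $\sigma(\mathbf{L}^{\infty},\mathbf{L}^1)$-compact truncation sets $\mathcal{S}_n$ with the truncated conjugates $V^n$, after a change-of-num\'eraire reduction to the case $1\in\mathcal{C}$, and the pointwise dual relations then drop out of the equality chain $0\leq\int_{\Omega}\bigl(V(\hat{\eta})-U(\hat{\xi})+\hat{\xi}\hat{\eta}\bigr)d\mu\leq v(y)-u(x)+xy=0$ with no differentiation at all. If you keep the variational route you must supply (a) and (b); your remark that surjectivity of $y\mapsto x(y)$ can be sidestepped by closing the loop with Fenchel--Moreau is correct, so that particular circularity is not the issue.
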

In order to state an abstract version of Theorem \ref{secondTheorem} we need
the following definitions. Let
$\tilde{\mathcal{D}}$ be a subset of $\mathcal{D}$
such that
\begin{itemize}
\item [(i)]
$\tilde{\mathcal{D}}$ is closed under the countable convex
combinations,

\item [(ii)]
for every $\xi\in\mathcal{C}$ we have
\end{itemize}
\begin{equation}\label{11191}
  \sup\limits_{\eta\in\mathcal{D} }\langle \xi,\eta \rangle =
  \sup\limits_{\eta\in\tilde{\mathcal{D}} }\langle \xi,\eta \rangle.
\end{equation}
Likewise, define $\tilde{\mathcal{C}}$ to be a subset of $\mathcal{C}$ such that
\begin{itemize}
\item [(iii)]
$\tilde{\mathcal{C}}$ is closed under the countable convex
combinations,

\item [(iv)]
for every $\eta\in\mathcal{D}$ we have
\begin{equation}\nonumber
  \sup\limits_{\xi\in\mathcal{C} }\langle \xi,\eta \rangle =
  \sup\limits_{\xi\in\tilde{\mathcal{C}} }\langle \xi,\eta \rangle.
\end{equation}
\end{itemize}

\begin{Theorem}\label{secondTheorem2}
Under the conditions of Theorem \ref{mainTheorem2}, we have
    \begin{displaymath}
      \begin{array}{rclc}
      v(y) & =& \inf\limits_{\eta\in\mathcal{\tilde{D}} }\int_{\Omega}
      {V}\left(y\eta\right)d\mu,& \quad y>0.
      \vspace{2mm}\\
      u(x) &=& \sup\limits_{\xi\in\mathcal{\tilde{C}} }\int_{\Omega}
      {U}\left(x\xi\right)d\mu,& \quad x>0.\\
      \end{array}
    \end{displaymath}

\end{Theorem}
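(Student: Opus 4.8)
The plan is to prove the two displayed identities of Theorem~\ref{secondTheorem2} separately, and to observe that they are interchanged by the symmetry between the pairs $(\mathcal C, U)$ and $(\mathcal D, -V)$ emphasized before Theorem~\ref{mainTheorem2}: since $-V$ again satisfies Assumption~\ref{Assumption2}, since $\mathcal C$ and $\mathcal D$ are mutual polars under $\langle\cdot,\cdot\rangle$, and since conditions (iii),(iv) defining $\tilde{\mathcal C}$ are obtained from (i),(ii) defining $\tilde{\mathcal D}$ by exchanging the two sets, it suffices to establish the identity for $v$ and $\tilde{\mathcal D}$; the identity for $u$ and $\tilde{\mathcal C}$ then follows by applying the former result to the symmetric configuration. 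Throughout I write $\mathcal E^{\circ}\set\{\xi\in\mathbf L^0_+:\langle\xi,\eta\rangle\le 1\text{ for all }\eta\in\mathcal E\}$ for the polar of a set $\mathcal E\subseteq\mathbf L^0_+$, so that (\ref{polarityCD}) reads $\mathcal C^{\circ}=\mathcal D$ and $\mathcal D^{\circ}=\mathcal C$. Because $\tilde{\mathcal D}\subseteq\mathcal D$, the inequality $v(y)\le\inf_{\eta\in\tilde{\mathcal D}}\int_{\Omega}V(y\eta)\,d\mu$ is immediate, and the entire content is the reverse inequality.

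The first step is to identify the closure of $\tilde{\mathcal D}$. I would invoke the bipolar theorem of Brannath and Schachermayer, according to which $\tilde{\mathcal D}^{\circ\circ}$ is the smallest convex, solid subset of $\mathbf L^0_+$ that is closed in measure and contains $\tilde{\mathcal D}$; since $\mathcal D=\mathcal C^\circ$ is itself convex, solid and closed in measure, and $\tilde{\mathcal D}\subseteq\mathcal D$, this already gives $\tilde{\mathcal D}^{\circ\circ}\subseteq\mathcal D$. For the reverse inclusion it suffices to check $\tilde{\mathcal D}^{\circ}\subseteq\mathcal C$, because taking polars then yields $\mathcal D=\mathcal D^{\circ\circ}=\mathcal C^\circ\subseteq\tilde{\mathcal D}^{\circ\circ}$. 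Now $\tilde{\mathcal D}\subseteq\mathcal D$ gives $\mathcal C=\mathcal D^\circ\subseteq\tilde{\mathcal D}^\circ$, and the opposite inclusion is exactly where hypothesis~(\ref{11191}) enters: for $\xi\in\tilde{\mathcal D}^{\circ}$ one has $\sup_{\eta\in\tilde{\mathcal D}}\langle\xi,\eta\rangle\le 1$, and by positive homogeneity of the support function together with (\ref{11191}) — applied after normalizing $\xi$ so that its support functional over $\mathcal D$ equals one, and with a truncation $\xi\wedge m$ controlled by the boundedness of $\mathcal D$ in measure to dispose of the case of an infinite support functional — one obtains $\sup_{\eta\in\mathcal D}\langle\xi,\eta\rangle\le 1$, i.e. $\xi\in\mathcal C$. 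Hence $\tilde{\mathcal D}^{\circ}=\mathcal C$ and $\tilde{\mathcal D}^{\circ\circ}=\mathcal D$; in words, $\mathcal D$ is the closed, convex, solid hull of $\tilde{\mathcal D}$.

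With this representation in hand I would transfer the dual minimizer $\hat\eta(y)$ from $\mathcal D(y)$ into $\tilde{\mathcal D}$. By Theorem~\ref{mainTheorem2} the minimizer exists and, through the relation $\hat\eta(y)=U'(\hat\xi(x))$ with $y=u'(x)$ and $U'>0$, is strictly positive $\mu$-a.e. Writing $\hat\eta(y)=y\hat\zeta$ with $\hat\zeta\in\mathcal D=\tilde{\mathcal D}^{\circ\circ}$, solidity of the hull lets me choose $g_n\le\hat\zeta$ with $g_n\to\hat\zeta$ in measure and $g_n\le h_n$ for some $h_n\in\tilde{\mathcal D}$; a Koml\'os / forward-convex-combination argument (legitimate because $\tilde{\mathcal D}$ is closed under countable convex combinations by~(i), so that finite and suitable countable convex combinations of the $h_n$ remain in $\tilde{\mathcal D}$) produces $\eta_n\in\tilde{\mathcal D}$ with $\eta_n\ge\bar g_n$ and $\bar g_n\to\hat\zeta$, yielding in the limit elements of $\tilde{\mathcal D}$ that dominate $\hat\zeta$ from above, at least up to an arbitrarily small multiplicative loss $(1-\tfrac1m)\hat\zeta$. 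Since $V(\omega,\cdot)$ is nonincreasing, any $\eta\in\tilde{\mathcal D}$ with $\eta\ge(1-\tfrac1m)\hat\zeta$ forces $\int_\Omega V(y\eta)\,d\mu\le\int_\Omega V\bigl((1-\tfrac1m)\hat\eta(y)\bigr)\,d\mu$, and letting $m\to\infty$ drives the right-hand side to $\int_\Omega V(\hat\eta(y))\,d\mu=v(y)$; as each such $\eta$ lies in $\tilde{\mathcal D}$ this gives $\inf_{\eta\in\tilde{\mathcal D}}\int_\Omega V(y\eta)\,d\mu\le v(y)$, which is the missing inequality.

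I expect the genuine obstacle to be precisely this last transfer: the objective $\eta\mapsto\int_\Omega V(y\eta)\,d\mu$ is only lower semicontinuous for convergence in measure (Fatou delivers the inequality in the wrong direction), so membership of $\hat\zeta$ in the hull does not by itself place near-minimizers in $\tilde{\mathcal D}$. The two features that must be combined to overcome this are the monotonicity of $V$, which converts the solid (downward) approximation furnished by the bipolar theorem into an \emph{upper} bound on the objective, and the closure of $\tilde{\mathcal D}$ under countable convex combinations, which is what lets the limiting construction be realized by an honest element of $\tilde{\mathcal D}$ rather than merely of its closure. The delicate point is the uniform control of the positive part $V^{+}(y\eta_n)$ needed to validate the passage to the limit under the sign convention of the integral; here I would use the finiteness $v(y')<\infty$ for $y'$ in a neighborhood of $y$ together with the convexity of $V$ to secure the requisite uniform integrability. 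The primal identity for $u$ and $\tilde{\mathcal C}$ requires no separate argument, following from the dual one by the symmetry described at the outset.
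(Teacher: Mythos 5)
Your reduction by symmetry and your first step are sound and agree with the paper: the identification of $\mathcal{D}$ as the closed, convex, solid hull of $\tilde{\mathcal{D}}$ via the Brannath--Schachermayer bipolar theorem (including the normalization/truncation needed to upgrade \eqref{11191} from $\xi\in\mathcal{C}$ to general $\xi\in\tilde{\mathcal{D}}^{\circ}$) is exactly the content of the paper's Lemma \ref{Lemma2}. The gap is in the transfer step, and it is twofold. First, you claim to produce, for each $m$, an honest element $\eta\in\tilde{\mathcal{D}}$ with $\eta\geq(1-\tfrac1m)\hat{\eta}(y)/y$ pointwise. The bipolar theorem and your Koml\'os argument only deliver a sequence in $\tilde{\mathcal{D}}$ converging $\mu$-a.e.\ to $\hat{\eta}(y)/y$; the a.e.\ limit lies in $\mathcal{D}$, not in $\tilde{\mathcal{D}}$, and closure under countable convex combinations is not closure in measure. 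An Egorov-type upgrade gives only additive control on a set of large measure, which does not yield the multiplicative bound $\eta\geq(1-\tfrac1m)\hat{\eta}(y)/y$ where $\hat{\eta}(y)$ is small, nor does it give it everywhere. Second, even granting such an $\eta$, your conclusion needs $\int_{\Omega}V\bigl((1-\tfrac1m)\hat{\eta}(y)\bigr)d\mu\to v(y)$, and in particular finiteness of $\int_{\Omega}V^{+}\bigl((1-\tfrac1m)\hat{\eta}(y)\bigr)d\mu$ for some $m$. The hypothesis $v(y')<\infty$ near $y$ bounds only the \emph{infimum} over $\mathcal{D}(y')$, not the value at the particular element $(1-\tfrac1m)\hat{\eta}(y)$, and without an asymptotic-elasticity-type condition convexity of $V$ gives no bound on $V^{+}((1-\lambda)z)$ in terms of $V^{+}(z)$ (note $V(0)=U(\infty)$ may be $+\infty$, so interpolation against $0$ fails). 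This integrability is precisely the hard point of the theorem, and ``uniform integrability from convexity'' does not supply it.

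The paper avoids both difficulties by never seeking pointwise domination. Lemma \ref{Lemma2} provides only a.e.\ convergence of a sequence in $\tilde{\mathcal{D}}$ to $\hat{\eta}(y)/y$; the separate Lemma \ref{finitenessOverTildeY} constructs a single $\zeta\in\tilde{\mathcal{D}}$ with $\int_{\Omega}V(\epsilon\zeta)d\mu<\infty$ as a countable convex combination $\sum_n a^n\eta^n$ of approximations to the dual minimizers at levels $a^n$ with $\sum_n a^n=1$, controlled over a tailored measurable partition $(A_n)$ --- this is where closure under countable convex combinations is genuinely used, and it is the ingredient your sketch is missing. The main proof then perturbs \emph{additively}: with $\theta\in\tilde{\mathcal{D}}$ satisfying $V(y\theta)\leq V(\hat{\eta})+\epsilon/(2\mu(\Omega))$ off a set $B$ of small measure, it sets $\eta=(y\theta+\epsilon\zeta)/(y+\epsilon)\in\tilde{\mathcal{D}}$ and uses monotonicity, $V(y\theta+\epsilon\zeta)\leq\min\bigl(V(y\theta),V(\epsilon\zeta)\bigr)$, to bound the integrand by $V(\hat{\eta})+\epsilon/(2\mu(\Omega))$ on $B^{c}$ and by the integrable $V(\epsilon\zeta)$ on $B$. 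This yields $\inf_{\eta\in\tilde{\mathcal{D}}}\int_{\Omega}V((y+\epsilon)\eta)d\mu\leq v(y)+\epsilon$, and one concludes at level $y$ by convexity and continuity of both sides. To repair your argument you would need to either prove your domination claim (which I do not believe holds in general) together with the finiteness of $\int_{\Omega}V^{+}\bigl((1-\tfrac1m)\hat{\eta}(y)\bigr)d\mu$, or adopt the paper's additive perturbation and establish an analogue of Lemma \ref{finitenessOverTildeY}.
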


The proofs of Theorem \ref{mainTheorem2} and \ref{secondTheorem2} are given via several lemmas.

\begin{Lemma}\label{sub-conjugacy}
  Under the conditions of Theorem \ref{mainTheorem2},
we have 
  \begin{equation}\label{subconjugacyEqn}
    v(y) \geq \sup\limits_{x>0}\left(u(x) - xy\right),~~~~y>0.
  \end{equation}
  As a result, both $u$ and $v$ are real-valued functions, such that
\begin{displaymath}
\limsup\limits_{x\to\infty}\frac{u(x)}{x}\leq 0 \quad{\rm and}\quad \liminf\limits_{y\to\infty}\frac{v(y)}{y}\geq 0.
\end{displaymath}
\end{Lemma}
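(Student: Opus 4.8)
The plan is to prove the weak duality (sub-conjugacy) inequality (\ref{subconjugacyEqn}) directly from the pointwise Fenchel--Young inequality together with the polarity structure of $\mathcal{C}$ and $\mathcal{D}$, and then to extract the real-valuedness of $u,v$ and the two asymptotic bounds as elementary consequences of the finiteness hypotheses in (\ref{mainCondition2}).

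First I would fix $x,y>0$ and take arbitrary $\xi\in\mathcal{C}(x)$ and $\eta\in\mathcal{D}(y)$. By the definition of the conjugate $V$, the Fenchel--Young inequality holds pointwise, $U(\omega,\xi(\omega))\le V(\omega,\eta(\omega))+\xi(\omega)\eta(\omega)$ for every $\omega$. Writing $\xi=x\xi_0$ and $\eta=y\eta_0$ with $\xi_0\in\mathcal{C}$, $\eta_0\in\mathcal{D}$, the polarity relation (\ref{polarityCD}) gives $\langle \xi,\eta\rangle = xy\,\langle \xi_0,\eta_0\rangle\le xy$; in particular the nonnegative function $\xi\eta$ is $\mu$-integrable. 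I would use this integrability of the cross term to integrate the pointwise inequality and obtain
\[
\int_{\Omega}U(\xi)\,d\mu \;\le\; \int_{\Omega}V(\eta)\,d\mu + xy,
\]
respecting the sign conventions defining the two integrals. Taking the supremum over $\xi\in\mathcal{C}(x)$ and then the infimum over $\eta\in\mathcal{D}(y)$ yields $u(x)\le v(y)+xy$, i.e. $u(x)-xy\le v(y)$; a final supremum over $x>0$ gives (\ref{subconjugacyEqn}).

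With (\ref{subconjugacyEqn}) in hand the remaining claims are immediate. Fixing any $y_0>0$, the hypothesis $v(y_0)<\infty$ and the bound $u(x)\le v(y_0)+xy_0$ force $u(x)<\infty$ for all $x$, which combined with $u(x)>-\infty$ shows $u$ is real-valued; symmetrically, fixing any $x_0>0$, finiteness of $u(x_0)$ and $v(y)\ge u(x_0)-x_0 y$ give $v(y)>-\infty$, so with $v(y)<\infty$ the function $v$ is real-valued. For the asymptotics I would simply divide: $u(x)/x\le v(y_0)/x+y_0\to y_0$ as $x\to\infty$, and letting $y_0\downarrow 0$ yields $\limsup_{x\to\infty}u(x)/x\le 0$; likewise $v(y)/y\ge u(x_0)/y-x_0\to -x_0$ as $y\to\infty$, and letting $x_0\downarrow 0$ gives $\liminf_{y\to\infty}v(y)/y\ge 0$.

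The one delicate point is the passage from the pointwise Fenchel--Young inequality to the integrated inequality displayed above, since $U(\xi)$ may equal $-\infty$ and $V(\eta)$ may equal $+\infty$ on sets of positive measure and the integrals are defined via the stated $\pm\infty$ conventions. The hard part will be keeping track of these conventions, but the polarity bound $\langle\xi,\eta\rangle\le xy<\infty$ is exactly what makes it routine: integrability of $\xi\eta$ lets me compare positive parts via $U^{+}(\xi)\le V^{+}(\eta)+\xi\eta$, so that whenever $\int_{\Omega}V^{+}(\eta)\,d\mu=+\infty$ the inequality is trivial, while otherwise $U^{+}(\xi)$ is integrable and the inequality can be verified by splitting into the cases $\int_{\Omega}V(\eta)\,d\mu=-\infty$ and $\int_{\Omega}V(\eta)\,d\mu\in\mathbb{R}$.
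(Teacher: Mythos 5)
Your proof is correct and follows essentially the same route as the paper: the paper phrases the argument as the elementary inequality $\sup\inf\le\inf\sup$ for $\int(U(\xi)-\xi\eta)\,d\mu$ and then bounds the two sides using exactly your two ingredients, the polarity bound $\langle\xi,\eta\rangle\le xy$ and the pointwise Fenchel--Young inequality $V(\eta)\ge U(\xi)-\xi\eta$. Your extra care with the $\pm\infty$ integration conventions (comparing positive parts via $U^{+}(\xi)\le V^{+}(\eta)+\xi\eta$) is a detail the paper leaves implicit, and it checks out.
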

\begin{proof} Fix $x>0$ and $y>0.$ We have
\begin{equation}\label{5-27-1}
  \sup\limits_{\xi\in\mathcal C(x)}\inf\limits_{\eta\in\mathcal
    D(y)}\int_{\Omega}\left(  U(\xi) - \xi\eta \right)d\mu \leq \inf\limits_{\eta\in\mathcal
    D(y)}\sup\limits_{\xi\in\mathcal C(x)}\int_{\Omega}\left(  U(\xi) - \xi\eta \right)d\mu.
\end{equation}
Using (\ref{polarityCD}) we can bound the left-hand side from below by
$u(x) - xy$:
\begin{equation}\nonumber
  \begin{array}{c}
    \sup\limits_{\xi\in\mathcal C(x)}\inf\limits_{\eta\in\mathcal
      D(y)}\int_{\Omega}\left( U(\xi) - \xi\eta \right)d\mu
    \geq \sup\limits_{\xi\in\mathcal C(x)}\left(\int_{\Omega}  U(\xi)d\mu - xy
    \right) = u(x) - xy. \\
  \end{array}
\end{equation}
Since $ V(\eta)\geq  U(\xi) - \xi\eta$ for every $\xi \geq 0$ and $\eta \geq 0$, we can
bound the right-hand side of (\ref{5-27-1}) from above by
$v(y)$:
\begin{equation}\nonumber
  \inf\limits_{\eta\in\mathcal
    D(y)}\sup\limits_{\xi\in\mathcal C(x)}\int_{\Omega}\left( U(\xi) - \xi\eta
  \right)d\mu \leq \inf\limits_{\eta\in\mathcal
    D(y)}\int_{\Omega} V(\eta)
  d\mu = v(y),
\end{equation}
and the result follows.
\end{proof}

The techniques in Kramkov and Schachermayer \cite{KS2003} inspired the proof
of the following lemma.
\begin{Lemma}\label{uiOfVminus}
  Under the conditions of Theorem \ref{mainTheorem2},
  for every $y>0$ the family $\left( {V}^{-}\left(h\right)\right)_{h\in\mathcal{D}(y)}$
  is uniformly integrable.
\end{Lemma}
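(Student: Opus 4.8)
The plan is to establish uniform integrability of $\left(V^-(h)\right)_{h\in\mathcal D(y)}$ through the two classical ingredients on the finite measure space $(\Omega,\mathcal F,\mu)$: a uniform $\mathbf L^1$-bound, together with vanishing tails, i.e. $\sup_{h\in\mathcal D(y)}\int_{\{V^-(h)>K\}}V^-(h)\,d\mu\to 0$ as $K\to\infty$. These two are equivalent to the assertion, so I would split the work accordingly.

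First, the $\mathbf L^1$-bound. Since $u(x_0)>-\infty$ for $x_0>0$ by \eqref{mainCondition2}, there is $\xi^*\in\mathcal C(x_0)$ with $\int_\Omega U^-(\xi^*)\,d\mu<\infty$. The Fenchel--Young inequality $U(\omega,x)\le V(\omega,y)+xy$, applied pointwise with $x=\xi^*(\omega)$ and $y=h(\omega)$, yields $V^-(h)\le U^-(\xi^*)+\xi^*h$ for every $h\in\mathbf L^0_+$. For $h\in\mathcal D(y)$ the polarity \eqref{polarityCD} gives $\langle\xi^*,h\rangle\le x_0 y$, whence $\int_\Omega V^-(h)\,d\mu\le \int_\Omega U^-(\xi^*)\,d\mu+x_0y<\infty$ uniformly in $h$. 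This is precisely where the finiteness condition $u>-\infty$ enters: $\xi^*$ plays the role of a near-optimizer of the primal problem.

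Second, the tails. Fix $h\in\mathcal D(y)$ and $K>0$ and work on $\{V^-(h)>K\}$, where $V(\omega,h)<-K$. Let $\gamma(\omega)$ be the level at which $V(\omega,\gamma)=-K/2$ (which exists on this set), so that $h>\gamma$; recall that $-V$ satisfies Assumption~\ref{Assumption2}, so $V(\omega,\cdot)$ is convex with $V'(\omega,\cdot)=-I(\omega,\cdot)$, $I:=(U')^{-1}\ge 0$. The gradient inequality then gives $V^-(h)\le \tfrac{K}{2}+I(\gamma)h$, and since $V^-(h)>K$ this improves to $V^-(h)\le 2\,I(\gamma)\,h$. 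The Inada condition for $-V$ forces $I(\omega,\gamma(\omega))\to 0$ as $K\to\infty$ wherever $V(\omega,\infty)=-\infty$. Splitting $\Omega$ according to whether $I(\gamma)\le\epsilon\xi^*$ or not, the first region contributes at most $2\epsilon\langle\xi^*,h\rangle\le 2\epsilon x_0 y$ uniformly in $h$, while the second region has $\mu$-measure tending to $0$ as $K\to\infty$ for each fixed $\epsilon$.

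The main obstacle is the second region: there $h$ is large yet the weight $I(\gamma)$ need not be dominated by $\epsilon\xi^*$, and the bare $\mathbf L^1$-bound on $\{\xi^*h\}$ does not by itself rule out mass escaping on sets of vanishing measure. Controlling this contribution uniformly is the delicate point, and I expect it to require the dual finiteness $v(y)<\infty$: a near-minimizer $\eta^*\in\mathcal D(y)$ with $\int_\Omega V^+(\eta^*)\,d\mu<\infty$ supplies, via Fenchel--Young and the convexity of $V$, the extra integrable control needed to absorb the escaping mass. Equivalently, I would argue by contradiction against a ``bump'' sequence $(h_n)$ concentrating $V^-$-mass on shrinking sets: the polarity bound $\langle\xi,\sum_n c_n h_n\rangle\le y$ for $\sum_n c_n\le 1$ keeps any convex aggregate inside $\mathcal D(y)$, while $v(y)>-\infty$ from Lemma~\ref{sub-conjugacy} caps its value, contradicting the unbounded aggregated mass. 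Coupling the sublinear (Inada) growth of $-V$ with this dual finiteness is the crux; the remaining steps are the routine bookkeeping sketched above.
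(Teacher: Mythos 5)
Your first step (the uniform $\mathbf L^1$-bound via Fenchel--Young, $V^-(h)\le U^-(\xi^*)+\xi^*h$, plus polarity) is correct, but it is the easy half: on a finite measure space an $\mathbf L^1$-bound does not give uniform integrability, and the tail estimate is where the lemma actually lives. There your argument has a genuine gap that you yourself flag. On the ``bad'' region $\{V^-(h)>K\}\cap\{I(\gamma_K)>\epsilon\xi^*\}$ you only know that the set has small measure (uniformly in $h$, since $\gamma_K$ does not depend on $h$); to conclude that the integral of $V^-(h)$ over that set is small you would need precisely the uniform integrability you are trying to prove, so the argument is circular at the decisive point. (There are also smaller untreated issues: $\xi^*$ need not be strictly positive, so $\{I(\gamma_K)>\epsilon\xi^*\}$ need not shrink; and a level $\gamma_K(\omega)$ with $V(\omega,\gamma_K)=-K/2$ need not exist when $V(\omega,0^+)=U(\omega,\infty)\le -K/2$.)

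The repair you sketch at the end --- aggregate a bump sequence $(h_n)$ by a convex combination $\sum_n c_nh_n\in\mathcal D(y)$ and contradict $v(y)>-\infty$ --- does not work as stated: since $V^-$ is nondecreasing, on the $n$-th bump set you only get $V^-\bigl(\sum_k c_kh_k\bigr)\ge V^-(c_nh_n)$, and $V^-(c_nh_n)$ can be far smaller than $V^-(h_n)$ when $c_n$ is small, so the aggregated mass need not diverge. The paper's proof fixes exactly this: it takes \emph{unnormalized} partial sums $\zeta^n=\sum_{k=1}^n\eta^k$, which land in $\mathcal D(ny)$ rather than $\mathcal D(y)$; disjointness of the bump sets and monotonicity of $V^-$ give $\int_\Omega V^-(\zeta^n)\,d\mu\ge\alpha(n-1)$, while monotonicity of $V^+$ and one element $\eta^1$ with $\int_\Omega V^+(\eta^1)\,d\mu=M<\infty$ (this is where $v(y)<\infty$ enters) give $\int_\Omega V^+(\zeta^n)\,d\mu\le M$. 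Hence $v(ny)\le M-\alpha(n-1)$, which contradicts not the pointwise finiteness $v(y)>-\infty$ but the asymptotic slope bound $\liminf_{z\to\infty}v(z)/z\ge 0$ supplied by Lemma~\ref{sub-conjugacy}. That quantitative slope information, rather than finiteness at a single $y$, is the ingredient missing from your sketch.
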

\begin{proof}
Fix $y>0.$ Assume by contradiction that $\left(
  {V}^{-}\left(h\right)\right)_{h\in\mathcal{D}(y)}$ is not a
uniformly integrable family. Then we can find a sequence $\left( \eta^n
\right)_{n\geq 2}\subset\mathcal{D}(y)$, a sequence $\left(A^n\right)_{n\geq
  2}$ of disjoint subsets of $\left(\Omega, \mathcal{F}\right)$ and a
constant $\alpha>0$ such that
\begin{equation}\nonumber
  \int_{\Omega} {V}^{-}\left(\eta^n\right)1_{A^n}d\mu \geq
  \alpha,\hspace{3mm}n\geq 2.
\end{equation}
Since $v(y)<\infty,$ there exists $\eta^1\in\mathcal D(y)$ such that
\begin{equation}\nonumber
  M\set \int_{\Omega} {V}^{+}\left( \eta^1\right)d\mu  < \infty.
\end{equation}
Define a sequence $\left( \zeta^n\right)_{n\geq 1}$ as
$\zeta^n \set \sum\limits_{k=1}^n \eta^k$, $n\geq 1$. Then by
(\ref{polarityCD})  for every $\xi\in\mathcal{C}$ we have
\begin{displaymath}
\langle\zeta^n, \xi\rangle = \sum\limits_{k=1}^n\langle \eta^k,
\xi\rangle \leq ny.
\end{displaymath}
Thus $\zeta^n\in \mathcal{D}(ny)$, $n\geq 1.$ Now, since ${V}^{-}$ is nonnegative and
nondecreasing we get
\begin{equation}\nonumber
  \begin{array}{rcl}
    \int_{\Omega} {V}^{-}\left( \zeta^n\right)d\mu &\geq&
    \int_{\Omega}{\sum\limits_{k=2}^n {V}^{-}\left(\sum\limits_{j=1}^n\eta^j\right)1_{A^k}}d\mu\\
    &\geq&\int_{\Omega}\sum\limits_{k=2}^n {V}^{-}\left(\eta^k\right)1_{A^k}d\mu\\
    &\geq& \alpha (n-1),\quad n\geq 2.\\
  \end{array}
\end{equation}
On the other hand, since ${V}^{+}$ is nonincreasing  we obtain
\begin{equation}\nonumber
  \int_{\Omega} {V}^{+}\left(\zeta^n\right)d\mu \leq
  \int_{\Omega} {V}^{+}\left(\eta^1\right)d\mu = M < \infty.
\end{equation}
Therefore we deduce that
\begin{equation}\nonumber
  \int_{\Omega} {V}\left(\zeta^n\right)d\mu\leq M-\alpha (n-1),\quad n\geq 2.
\end{equation}
Consequently,
\begin{equation}\nonumber
  \liminf\limits_{ z\to\infty}\frac{v(z)}{z}\leq
  \liminf\limits_{n\to\infty}\frac{\int_{\Omega}{V}\left(\zeta^n\right)d\mu}{ny} \leq
  \liminf\limits_{n\to\infty}\frac{M-\alpha (n-1)}{ny} = -\frac{\alpha}{y} < 0,
\end{equation}
which contradicts to the conclusion of Lemma \ref{sub-conjugacy}.
\end{proof}

We need a version of Koml\'os' lemma for the set
$\mathcal D.$ Some other formulations of Koml\'{o}s' lemma are proven
in \cite{Komlos-1967, DS, BieglBock-Schachermayer-Veliyev-2010,
  Schwartz-1986}.
\begin{Lemma}\label{Komlos}
Assume that the sets $\mathcal C$ and $\mathcal{D}$
  satisfy (\ref{polarityCD}) and (\ref{positivityCD}).
  Let  $\left(\eta^n\right)_{n\geq
    1}\subset\mathcal{D}$.  Then
  there exists a sequence of convex combinations  $\zeta^n\in{\rm
    conv}\left(\eta^n, \eta^{n+1},\dots\right),$ $n\geq 1,$ and an element
 $\hat{\eta}\in\mathcal{D},$ such that $\left(\zeta^n\right)_{n\geq 1}$
 converges $\mu$ a.e. to $\hat {\eta}$.
\end{Lemma}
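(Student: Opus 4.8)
The plan is to first produce, by a Koml\'os-type argument, a sequence of convex combinations that converges $\mu$ a.e. to a finite limit, and then to verify that this limit lies in $\mathcal{D}$ by means of the bipolar characterization \eqref{polarityCD}.

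First I would record two structural facts about $\mathcal{D}$. Since $\mathcal{D}$ is convex, every convex combination $\zeta \in {\rm conv}\left(\eta^n, \eta^{n+1}, \dots\right)$ again belongs to $\mathcal{D}$. Moreover $\mathcal{D}$ is bounded in $\mathbf{L}^0(\mu)$: by \eqref{positivityCD} there is a strictly positive $\xi_0 \in \mathcal{C}$, and \eqref{polarityCD} gives $\int_{\Omega} \xi_0 \eta \, d\mu = \langle \xi_0, \eta\rangle \leq 1$ for every $\eta \in \mathcal{D}$. Since $\xi_0 > 0$ $\mu$ a.e. and $\mu$ is finite, a Chebyshev estimate then yields $\sup_{\eta \in \mathcal{D}} \mu\left(\{\eta > M\}\right) \to 0$ as $M \to \infty$, which is precisely boundedness in measure.

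Next I would invoke the version of Koml\'os' lemma for nonnegative measurable functions (in the form of Lemma~A1.1 in \cite{DS}): applied to $\left(\eta^n\right)_{n\geq 1}$ it produces convex combinations $\zeta^n \in {\rm conv}\left(\eta^n, \eta^{n+1}, \dots\right)$ and a $[0, \infty]$-valued measurable $\hat\eta$ with $\zeta^n \to \hat\eta$ $\mu$ a.e. Because the sequence $\left(\zeta^n\right)_{n\geq 1}$ lies in the $\mathbf{L}^0$-bounded set $\mathcal{D}$, the limit is finite $\mu$ a.e., so in fact $\hat\eta \in \mathbf{L}^0_{+}$. Establishing this finiteness, rather than merely that $\hat\eta$ is $[0,\infty]$-valued, is the one place where boundedness in measure is essential, and I regard it as the main subtlety of the argument.

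Finally I would show $\hat\eta \in \mathcal{D}$. Each $\zeta^n$ belongs to $\mathcal{D}$, so $\langle \xi, \zeta^n\rangle \leq 1$ for every $\xi \in \mathcal{C}$ by \eqref{polarityCD}. Since $\xi \geq 0$ and $\zeta^n \geq 0$ with $\xi \zeta^n \to \xi \hat\eta$ $\mu$ a.e., Fatou's lemma gives $\langle \xi, \hat\eta\rangle = \int_{\Omega} \xi \hat\eta \, d\mu \leq \liminf_{n\to\infty} \int_{\Omega} \xi \zeta^n \, d\mu \leq 1$ for every $\xi \in \mathcal{C}$. By the polarity characterization \eqref{polarityCD} this forces $\hat\eta \in \mathcal{D}$, which completes the proof.
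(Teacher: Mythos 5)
Your proof is correct and follows essentially the same route as the paper: Lemma~A1.1 of \cite{DS} to extract a.e.\ convergent convex combinations, convexity of $\mathcal{D}$ to keep them in $\mathcal{D}$, and Fatou's lemma together with the polarity \eqref{polarityCD} to place the limit in $\mathcal{D}$. Your explicit verification that the limit is finite $\mu$ a.e.\ (via boundedness of $\mathcal{D}$ in $\mathbf{L}^0$, which the paper leaves implicit) is a worthwhile addition, though it also follows from the Fatou step applied to the strictly positive $\xi_0$ from \eqref{positivityCD}.
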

\begin{proof} Using Lemma A1.1 p.515 in \cite{DS} we can construct a sequence
$\zeta^n\in{\rm conv}\left(\eta^n, \eta^{n+1},\dots\right),$ $n\geq 1,$
such that $\left(\zeta^n\right)_{n\geq 1}$ converges $\mu$ a.e. to an element $\hat \eta$.
 By convexity of the set $ \mathcal{D}$ we obtain that
$\left( \zeta^n\right)_{n\geq 1}$ is a subset of $\mathcal{D}.$ By Fatou's
lemma for every $\xi\in\mathcal C$ we have
\begin{displaymath}
  \langle  \xi, \hat{\eta}\rangle \leq \liminf\limits_{n\to\infty}\langle \xi,\zeta^n\rangle \leq 1.
\end{displaymath}
Hence, $\hat{ \eta} \in\mathcal D$.
\end{proof}

\begin{Lemma}\label{ExistenceOfDualOptimizer}
  Under conditions of Theorem \ref{mainTheorem2} for each $y>0$ there
  exists a unique $\hat{\eta}(y)\in\mathcal{D}(y)$, such that
  \begin{equation}\label{11-25-2011-1}
    v(y) = \int_{\Omega}{V}\left(\hat{\eta}(y)\right)d\mu.
  \end{equation}
  As a consequence, $v$ is strictly convex.
\end{Lemma}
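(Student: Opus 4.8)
The plan is to prove existence by the direct method, combining a minimizing sequence with the Koml\'os-type Lemma~\ref{Komlos} and a Fatou argument whose negative tail is controlled by Lemma~\ref{uiOfVminus}. Fix $y>0$ and choose a minimizing sequence $\left(\eta^n\right)_{n\geq 1}\subset\mathcal{D}(y)$, i.e. $\int_\Omega V(\eta^n)\,d\mu\to v(y)$. By Lemma~\ref{sub-conjugacy} the target value $v(y)$ is a real number, and by Lemma~\ref{uiOfVminus} each $\int_\Omega V^{-}(\eta^n)\,d\mu<\infty$, so each $V(\eta^n)$ is $\mu$-integrable. Applying Lemma~\ref{Komlos} (rescaled from $\mathcal{D}$ to $\mathcal{D}(y)=y\mathcal{D}$) would produce convex combinations $\zeta^n\in\mathrm{conv}\left(\eta^n,\eta^{n+1},\dots\right)$ and an element $\hat{\eta}(y)\in\mathcal{D}(y)$ with $\zeta^n\to\hat{\eta}(y)$ $\mu$-a.e.

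The crux is to show $\int_\Omega V(\hat{\eta}(y))\,d\mu\leq v(y)$; the reverse inequality is immediate since $\hat{\eta}(y)\in\mathcal{D}(y)$. Since $-V$ satisfies Assumption~\ref{Assumption2}, the map $V(\omega,\cdot)$ is convex, so pointwise $V(\zeta^n)\leq\sum_k a^n_k V(\eta^k)$ for the defining convex weights; integrating and using that the tail values $\int_\Omega V(\eta^k)\,d\mu$ converge to $v(y)$ gives $\limsup_n\int_\Omega V(\zeta^n)\,d\mu\leq v(y)$. Because $\zeta^n\in\mathcal{D}(y)$, Lemma~\ref{uiOfVminus} makes the family $\left(V^{-}(\zeta^n)\right)_{n\geq 1}$ uniformly integrable, and $V(\omega,\cdot)$ is continuous, so $V(\zeta^n)\to V(\hat{\eta}(y))$ $\mu$-a.e. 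The version of Fatou's lemma for sequences with uniformly integrable negative parts then yields $\int_\Omega V(\hat{\eta}(y))\,d\mu\leq\liminf_n\int_\Omega V(\zeta^n)\,d\mu\leq v(y)$, establishing optimality. I expect this limit passage --- reconciling the lower semicontinuity furnished by Fatou with the upper bound furnished by convexity, on a domain where $V$ changes sign --- to be the main obstacle, and it is precisely where Lemma~\ref{uiOfVminus} is indispensable.

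For uniqueness, suppose $\hat{\eta}_1,\hat{\eta}_2\in\mathcal{D}(y)$ are both optimal. Their midpoint lies in $\mathcal{D}(y)$ by convexity of $\mathcal{D}$, and strict concavity of $-V$ (Assumption~\ref{Assumption2}) makes $V(\omega,\cdot)$ strictly convex; hence if $\mu\left(\{\hat{\eta}_1\neq\hat{\eta}_2\}\right)>0$ we would obtain $\int_\Omega V\left(\tfrac12(\hat{\eta}_1+\hat{\eta}_2)\right)d\mu<\tfrac12\int_\Omega V(\hat{\eta}_1)\,d\mu+\tfrac12\int_\Omega V(\hat{\eta}_2)\,d\mu=v(y)$, contradicting the definition of $v(y)$. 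All integrals are finite (negative parts by Lemma~\ref{uiOfVminus}, positive parts since the total equals $v(y)<\infty$), so the strict pointwise inequality survives integration; thus $\hat{\eta}_1=\hat{\eta}_2$ $\mu$-a.e.

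Finally, for strict convexity of $v$ I would take $y_1\neq y_2$ and $\lambda\in(0,1)$, set $y=\lambda y_1+(1-\lambda)y_2$, and note $\lambda\hat{\eta}(y_1)+(1-\lambda)\hat{\eta}(y_2)\in\mathcal{D}(y)$ (a scaling computation resting on convexity of $\mathcal{D}$). Convexity of $V$ then gives $v(y)\leq\lambda v(y_1)+(1-\lambda)v(y_2)$. If equality held, the pointwise convexity inequality for $V$ would have to be an a.e. equality, forcing $\hat{\eta}(y_1)=\hat{\eta}(y_2)$ $\mu$-a.e. by strict convexity of $V(\omega,\cdot)$. Writing this common element as $y_1\xi$ with $\xi\in\mathcal{D}$ (say $y_1<y_2$), the strictly larger competitor $y_2\xi\in\mathcal{D}(y_2)$ would have a strictly smaller dual objective --- here $V(\omega,\cdot)$ is strictly decreasing and the optimizer is not $\mu$-a.e. zero, because $\mathcal{D}$ contains a strictly positive element by (\ref{positivityCD}) --- contradicting optimality of $\hat{\eta}(y_2)$. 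Hence $v$ is strictly convex.
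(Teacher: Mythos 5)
Your proof follows essentially the same route as the paper's: a minimizing sequence, the Koml\'os-type Lemma~\ref{Komlos} to extract a.e.\ convergent convex combinations with limit in $\mathcal{D}(y)$, convexity of $V$ together with the uniform integrability of the negative parts from Lemma~\ref{uiOfVminus} to pass to the limit via Fatou, and strict convexity of $V$ for both uniqueness and the strict convexity of $v$. The only deviation is that you explicitly rule out the degenerate case $\hat{\eta}(y_1)=\hat{\eta}(y_2)$ in the strict-convexity step (via strict monotonicity of $V$ and the existence of a strictly positive element of $\mathcal{D}$), a detail the paper leaves implicit; this is a useful refinement but not a different approach.
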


\begin{proof} Fix $y>0.$ Let $\left(\eta^n\right)_{n=1}^{\infty}\subset\mathcal{D}(y)$ be
a minimizing sequence, i.e.,
\begin{equation}\nonumber
  v(y) = \lim\limits_{n\to\infty}\int_{\Omega} {V}\left(\eta^n\right)d\mu.
\end{equation}
It follows from Lemma \ref{Komlos} that there exists a sequence of
convex combinations $\zeta^n\in{\rm conv}\left( \eta^n, \eta^{n+1},
  \dots\right)$, $n\geq 1$, and an
element $\hat{\eta}(y)\in \mathcal{D}(y),$ such that
$\left(\zeta^n\right)_{n=1}^{\infty}$ converges $\mu$ a.e. to $\hat{\eta}(y)$.

Using convexity of $V$, Lemma \ref{uiOfVminus}, and Fatou's lemma we get
\begin{equation}\nonumber
  v(y)=\liminf\limits_{n\to\infty}\int_{\Omega}{V}\left(\eta^n\right)d\mu
  \geq \liminf\limits_{n\to\infty}\int_{\Omega}{V}\left(\zeta^n\right)d\mu
  \geq\int_{\Omega}{V}\left(\hat{\eta}(y)\right)d\mu.
\end{equation}
Therefore (\ref{11-25-2011-1}) holds.
Uniqueness of the minimizer to (\ref{dualProblem2})
follows from the strict convexity of ${V}$.

 To show the strict convexity of $v$ fix $y_1<y_2.$
Since $\frac{\hat{\eta}({y_1})+\hat{\eta}({y_2})}{2}\in\mathcal{D}\left(
  \frac{y_1+y_2}{2}\right)$ and ${V}$ is strictly convex we
obtain
\begin{equation}\nonumber
  v\left(\frac{y_1+y_2}{2}\right)\leq \int_{\Omega}{V}\left(
    \frac{\hat{\eta}({y_1})+\hat{\eta}({y_2})}{2}\right)d\mu<
  \frac{v(y_1) + v(y_2)}{2}.
\end{equation}
\end{proof}

 By the symmetry between the optimization problems (\ref{primalProblem2}) and
 (\ref{dualProblem2}), the following result is a corollary to
Lemma \ref{ExistenceOfDualOptimizer}.

\begin{Lemma}\label{existenceOfPrimalOptimizer}
  Under the assumptions of Theorem \ref{mainTheorem2}, for every $x>0$
  there exists a unique maximizer to the primal problem
  (\ref{primalProblem2}).  As a consequence, $u$ is strictly concave.
\end{Lemma}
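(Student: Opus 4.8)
The plan is to exploit the built-in symmetry between the primal problem \eqref{primalProblem2} and the dual problem \eqref{dualProblem2} and to deduce the statement directly from Lemma \ref{ExistenceOfDualOptimizer}, which has already been established. The abstract framework is arranged so that interchanging the roles of $\mathcal{C}$ and $\mathcal{D}$ while replacing $U$ by $-V$ turns the primal maximization into a dual minimization of precisely the form treated in Lemma \ref{ExistenceOfDualOptimizer}. I would therefore set up a ``swap'' and verify that all hypotheses survive it.

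First I would record the structural facts that make the swap legitimate. The polarity relation \eqref{polarityCD} and the positivity condition \eqref{positivityCD} are manifestly symmetric in $\mathcal{C}$ and $\mathcal{D}$, so the pair $(\mathcal{D}, \mathcal{C})$ again satisfies the standing assumptions. Next, $-V$ satisfies Assumption \ref{Assumption2} (as noted immediately after the definition of $V$), and, since $x\mapsto U(\omega,x)$ is concave and regular, biconjugation gives the pointwise identity $\sup_{y>0}\bigl((-V)(\omega,y)-xy\bigr)=-U(\omega,x)$; that is, the convex conjugate of $-V$ is $-U$. Thus, playing the game with $\hat U:=-V$ as the utility field, $\hat{\mathcal{C}}:=\mathcal{D}$ as the primal domain and $\hat{\mathcal{D}}:=\mathcal{C}$ as the dual domain, the associated primal value function becomes $\hat u=-v$ and the associated dual value function becomes $\hat v=-u$.

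With this dictionary I would then check that the finiteness hypothesis \eqref{mainCondition2} is preserved. For the swapped data the required conditions read $\hat v<\infty$ and $\hat u>-\infty$ everywhere, i.e. $-u(x)<\infty$ and $-v(y)>-\infty$ for all $x,y>0$; these are exactly $u(x)>-\infty$ and $v(y)<\infty$, which hold by \eqref{mainCondition2}. Hence all hypotheses of Theorem \ref{mainTheorem2}, and therefore of Lemma \ref{ExistenceOfDualOptimizer}, hold for the swapped problem. Applying that lemma yields, for every $x>0$, a unique minimizer of $\hat v=-u$ over $\hat{\mathcal{D}}(x)=\mathcal{C}(x)$, which is precisely a unique maximizer $\hat\xi(x)$ of \eqref{primalProblem2}; and the strict convexity of $\hat v=-u$ asserted by the lemma is exactly the strict concavity of $u$.

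The main obstacle is not a hard estimate but the bookkeeping of confirming that the setup is genuinely symmetric, so that the three ingredients behind Lemma \ref{ExistenceOfDualOptimizer} transfer intact: the Koml\'os-type compactness of Lemma \ref{Komlos} (whose proof uses only \eqref{polarityCD}, \eqref{positivityCD} and convexity, all symmetric), the uniform-integrability estimate of Lemma \ref{uiOfVminus} (which under the swap becomes uniform integrability of $\bigl(U^{+}(h)\bigr)_{h\in\mathcal{C}(x)}$ and rests on the bound $\limsup_{x\to\infty}u(x)/x\leq 0$ from Lemma \ref{sub-conjugacy} in place of the corresponding bound on $v$), and the strict convexity of $-U$. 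Once the conjugacy identity for $-V$ and the matching of the finiteness conditions are verified, the result follows by direct citation. If one preferred to avoid the relabeling, I would instead re-run the proof of Lemma \ref{ExistenceOfDualOptimizer} verbatim with $(\mathcal{C},U,u)$ in place of $(\mathcal{D},V,v)$, invoking the primal analogues of Lemmas \ref{Komlos} and \ref{uiOfVminus}, which hold by identical arguments.
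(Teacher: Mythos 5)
Your proof is correct and follows exactly the paper's route: the paper states this lemma as a corollary of Lemma \ref{ExistenceOfDualOptimizer} ``by the symmetry between the optimization problems (\ref{primalProblem2}) and (\ref{dualProblem2})'', which is precisely the swap $(\mathcal{C},U,u)\leftrightarrow(\mathcal{D},-V,-v)$ you carry out. Your version simply makes explicit the bookkeeping (conjugate of $-V$ is $-U$, preservation of \eqref{polarityCD}, \eqref{positivityCD} and \eqref{mainCondition2}) that the paper leaves implicit.
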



\begin{Lemma}\label{conjugacy}Under the assumptions of Theorem
  \ref{mainTheorem2}, we have
\begin{equation}\label{58121}
v(y)=\sup\limits_{x>0}\left(u(x) - xy\right),\quad y>0.
\end{equation}
\end{Lemma}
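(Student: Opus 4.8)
One direction, $v(y)\ge\sup_{x>0}\bigl(u(x)-xy\bigr)$, is precisely Lemma~\ref{sub-conjugacy}, so the content of the statement is the reverse inequality $v(y)\le\sup_{x>0}\bigl(u(x)-xy\bigr)$. The plan is, for each fixed $y>0$, to exhibit a single $x>0$ and a primal-feasible $\hat\xi\in\mathcal{C}(x)$ witnessing $u(x)-xy\ge v(y)$, and to build this witness from the dual optimizer $\hat\eta=\hat\eta(y)\in\mathcal{D}(y)$ furnished by Lemma~\ref{ExistenceOfDualOptimizer}. Since for each $\omega$ the supremum defining $V(\omega,\cdot)$ is attained at the argument where $U'$ takes the given value, I would set $\hat\xi:=-V'(\hat\eta)=(U')^{-1}(\hat\eta)$ pointwise, so that the conjugacy identity $V(\hat\eta)=U(\hat\xi)-\hat\xi\hat\eta$ holds $\mu$-a.e. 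Integrating it gives $v(y)=\int_\Omega U(\hat\xi)\,d\mu-\langle\hat\xi,\hat\eta\rangle$; hence, once I know that $\hat\xi\in\mathcal{C}(x)$ for $x:=\langle\hat\xi,\hat\eta\rangle/y$, I obtain $u(x)\ge\int_\Omega U(\hat\xi)\,d\mu=v(y)+xy$, which is exactly the required inequality.

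To show $\hat\xi\in\mathcal{C}(x)=x\mathcal{C}$ I would invoke the bipolarity relation~\eqref{polarityCD}: this membership is equivalent to $\langle\hat\xi,\eta\rangle\le\langle\hat\xi,\hat\eta\rangle$ for every $\eta\in\mathcal{D}(y)$. Such a variational inequality should follow from the optimality of $\hat\eta$. For $\eta\in\mathcal{D}(y)$ and $\epsilon\in(0,1)$ the combination $\eta_\epsilon:=(1-\epsilon)\hat\eta+\epsilon\eta$ again lies in $\mathcal{D}(y)$, so $\int_\Omega V(\eta_\epsilon)\,d\mu\ge v(y)=\int_\Omega V(\hat\eta)\,d\mu$, and therefore the difference quotients $\epsilon^{-1}\bigl(V(\eta_\epsilon)-V(\hat\eta)\bigr)$ integrate to a nonnegative number. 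Convexity of $V(\omega,\cdot)$ makes these quotients nonincreasing as $\epsilon\downarrow0$, with pointwise limit $V'(\hat\eta)(\eta-\hat\eta)=-\hat\xi(\eta-\hat\eta)$; transferring the nonnegativity of the integrals to the limit yields $\int_\Omega\hat\xi(\hat\eta-\eta)\,d\mu\ge0$, i.e. the desired $\langle\hat\xi,\eta\rangle\le\langle\hat\xi,\hat\eta\rangle$.

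The main obstacle is the analysis underlying these formal steps, where only the negative part of $V$ is a priori controlled. The quotients are bounded above by $V(\eta)-V(\hat\eta)$, and although $V(\hat\eta)\in\mathbf{L}^0$ is in fact integrable — its integral is the finite number $v(y)$ and $V^-(\hat\eta)$ is integrable by Lemma~\ref{uiOfVminus} — the positive part $V^+(\eta)$ need not be integrable for a general $\eta\in\mathcal{D}(y)$, so the passage to the limit is not automatic. I would first prove the variational inequality for those $\eta$ with $\int_\Omega V(\eta)\,d\mu<\infty$, where the quotients are dominated by the integrable function $V(\eta)-V(\hat\eta)$ and a reverse Fatou argument applies cleanly, and then extend to arbitrary $\eta\in\mathcal{D}(y)$ by a monotone truncation of $\eta$ combined with the uniform integrability of $\bigl(V^-(h)\bigr)_{h\in\mathcal{D}(y)}$ supplied by Lemma~\ref{uiOfVminus}.

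A second cluster of technical points concerns whether the witness genuinely lies in $\mathbf{L}^0_{+}$ and whether $x$ is finite and positive. Because $U(\infty)$ may be finite, $\hat\xi=(U')^{-1}(\hat\eta)$ equals $+\infty$ on $\{\hat\eta=0\}$, so I must show $\mu(\{\hat\eta=0\})=0$; I would deduce this from optimality of $\hat\eta$ by perturbing it upward on its zero set toward a strictly positive element of $\mathcal{D}$ guaranteed by~\eqref{positivityCD}, using that the Inada condition for $-V$ forces $V'(0^+)=-\infty$ and hence a first-order strict decrease of the objective, which would contradict minimality. Positivity of $x=\langle\hat\xi,\hat\eta\rangle/y$ is then clear since $\hat\xi>0$ and $\hat\eta$ is nontrivial, while its finiteness — equivalently, finiteness of $\int_\Omega U(\hat\xi)\,d\mu$ in view of the identity in the first paragraph — must be secured by an a priori estimate rather than read off from the construction. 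It is this integrability bookkeeping, together with the limit interchange above, rather than the formal Fenchel duality, where the real work lies.
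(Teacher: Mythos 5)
Your reduction to the reverse inequality via Lemma~\ref{sub-conjugacy} is fine, but the construction you propose --- take the dual optimizer $\hat\eta=\hat\eta(y)$ from Lemma~\ref{ExistenceOfDualOptimizer}, set $\hat\xi:=-V'(\hat\eta)$, and place $\hat\xi$ in $\mathcal{C}(x)$ with $x=\langle\hat\xi,\hat\eta\rangle/y$ --- has a genuine gap exactly at the point you defer to ``an a priori estimate'': the finiteness of $\langle\hat\xi,\hat\eta\rangle$, equivalently of $\int_\Omega U^{+}(\hat\xi)\,d\mu$. Without it, $x$ is not a positive real, the bipolarity relation \eqref{polarityCD} does not put $\hat\xi$ into any $\mathcal{C}(x)$, and the chain $u(x)\ge\int_\Omega U(\hat\xi)\,d\mu=v(y)+xy$ collapses. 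Under the hypotheses \eqref{mainCondition2} alone there is no pointwise bound of $-yV'(y)$ by $V^{+}(y)$ and constants; such a bound is precisely what the asymptotic elasticity condition supplies in the classical Kramkov--Schachermayer version of this argument, and dispensing with that condition is the entire point of the paper. The identity $\langle\hat\xi(x),\hat\eta(y)\rangle=xy$ is true, but in the paper it is a \emph{consequence} of the conjugacy \eqref{58121}: in the proof of Theorem~\ref{mainTheorem2} it is read off from $0\le\int_\Omega\bigl(V(\hat\eta)-U(\hat\xi)+\hat\xi\hat\eta\bigr)d\mu\le v(y)-u(x)+xy=0$, so using it as an input to prove \eqref{58121} is circular.

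There is also a secondary defect in your variational inequality. For general $\eta\in\mathcal{D}(y)$ you propose to tame the non-integrability of $V^{+}(\eta)$ by ``monotone truncation'', but $V$ is decreasing, so replacing $\eta$ by $\eta\wedge n$ \emph{increases} $V^{+}$; the truncation goes the wrong way, and mixing $\eta$ with an element of finite dual value only yields the inequality up to a multiplicative constant. These obstructions are why the paper takes an entirely different route: a minimax argument over the $\sigma(\mathbf{L}^{\infty},\mathbf{L}^1)$-compact truncated primal domains $\mathcal{S}_n$, the truncated conjugates $V^n$, Lemma~\ref{Komlos} and the uniform integrability of Lemma~\ref{uiOfVminus} (Step~1, under the normalization $1\in\mathcal{C}$ and $\int_\Omega U(1)\,d\mu>-\infty$), followed by a change of num\'eraire reducing the general case to that normalization (Step~2). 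To salvage your direct construction you would have to prove $\int_\Omega\hat\xi\hat\eta\,d\mu<\infty$ from \eqref{mainCondition2} alone; as written, that is the missing and hardest step.
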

\begin{proof}
The two-step proof is based on the change of num\'eraire ideas.

\underline{\textit{Step 1.}}
~~Let us show (\ref{58121}) assuming that
\begin{displaymath}
{\rm the~constant~function}~~1\in\mathcal C\quad{\rm and}\quad\int_{\Omega}U(1)d\mu>-\infty.
\end{displaymath}
In this case $\int_{\Omega}U(x)d\mu$ is finite for any constant $x\geq 1$.
Let $\mathcal S_n$ be the
set of all nonnegative, measurable functions $\xi: \Omega\to [0, n],$
i.e.,
\begin{equation}\label{defS_n}
  \mathcal S_n\set\left\{\xi\in \mathbf{L}^0: \xi(\omega)\in[0, n]{\hspace{1mm}\rm
      for\hspace{1mm} all}\hspace{1mm} \omega\in
    \Omega \right\},\quad n>0.
\end{equation}
The sets $\mathcal S_n$ are $\sigma(\mathbf{L}^{\infty},\mathbf{L}^1)$ compact. Fix
$y>0$.
Since $\mathcal{D}(y)$ is convex and $ {U}$ is concave, the minimax theorem (see \cite{Str}, Theorem 45.8)
gives the following equality
\begin{equation}\label{06131}
  \sup\limits_{\xi\in \mathcal S_n}\inf\limits_{{\eta}\in\mathcal{D}(y)}
  \int_{\Omega} \left( {U}(  \xi) -  \xi {\eta} \right)d\mu
  =\inf\limits_{{\eta}\in\mathcal{D}(y)}\sup\limits_{\xi\in \mathcal S_n}
  \int_{\Omega}\left( {U}( \xi) -  \xi {\eta}
  \right)d\mu.
\end{equation}
Denote
\begin{displaymath}
  \mathcal{C}'(x) \set \left\{
    \xi\in \mathcal{C}(x):
    \sup\limits_{{\eta}\in\mathcal{D}(y)}\langle \xi,{\eta}\rangle = xy
  \right\}.
\end{displaymath}
It follows from (\ref{defCxDy}) that
$\bigcup\limits_{x>0}\mathcal{C}'(x) \bigcup\left\{ \xi\equiv 0 \right\} =
\bigcup\limits_{x>0}\mathcal{C}(x)$. As a result, we get
\begin{equation}\label{06132}
  \begin{array}{rcl}
    \sup\limits_{x>0}\left(u(x) - xy\right)
    &=& \sup\limits_{x>0}\sup\limits_{\xi\in\mathcal
      C'(x)}\left(\int_{\Omega}  U(\xi)d\mu
      -xy\right) \\
    &\geq&\lim\limits_{n\to\infty}\sup\limits_{\xi\in \mathcal S_n}\inf\limits_{{\eta}\in\mathcal{D}(y)}\int_{\Omega}\left( {U}( \xi) -
       \xi {\eta}\right)d\mu.\\
  \end{array}
\end{equation}
In view of (\ref{06131}), (\ref{06132}), and Lemma \ref{sub-conjugacy}
it suffices to show that
\begin{equation}\label{5-27-6}
  v(y)=\lim\limits_{n\to\infty}\inf\limits_{{\eta}\in\mathcal{D}(y)}\sup\limits_{\xi\in \mathcal S_n}
  \int_{\Omega} \left( {U}( \xi) -  \xi {\eta}\right)d\mu.
\end{equation}
For each $n\geq 1$ define $V^n$ as follows:
\begin{displaymath}
    V^n(z) \set \sup\limits_{0<x\leq n}\left( U(  x) -  xz\right),\quad z>0.
\end{displaymath}
Then via pointwise maximization we get
\begin{displaymath}
     \inf\limits_{{\eta}\in\mathcal{D}(y)}\sup\limits_{\xi\in \mathcal S_n}
  \int_{\Omega} \left( {U}( \xi) -  \xi {\eta}\right)d\mu
=
 \inf\limits_{{\eta}\in\mathcal{D}(y)}
  \int_{\Omega}  {V}^n({\eta})d\mu
 \set v^n(y).
\end{displaymath}
Notice that $v^n\leq v$ and $\left(v^n(y)\right)_{n\geq 1}$ is an increasing sequence.
Let $\left({\eta}^n\right)_{n\geq 1}\subset \mathcal{D}(y)$ be such that
\begin{equation}\label{4-6-1}
  \lim\limits_{n\to\infty}v^n(y) =
  \lim\limits_{n\to\infty}\int_{\Omega}  {V}^n({\eta}^n) d\mu.
\end{equation}
It follows from Lemma \ref{Komlos}, that there exists a sequence ${\zeta}^n\in
{\rm conv}({\eta}^n, {\eta}^{n+1}, \dots),$ $n\geq 1,$ such that
$\left({\zeta}^n\right)_{n\geq 1}$ converges  $\mu$ a.e. to
a function $\hat{{\zeta}}\in\mathcal{D}(y)$.

We claim that $\left(
V^n\right)^{-}({\zeta}^n),$ $n\geq 2,$ is a uniformly integrable sequence.
Indeed, for $n\geq 2$ we have
\begin{displaymath}
V^n(\zeta) \geq V^2(\zeta) \geq V(\zeta)1_{\{\zeta\geq U'(2)\}} + \left(U(2) - 2U'(2)\right)1_{\{\zeta< U'(2)\}}.
\end{displaymath}
The concavity of $U$ yields that $U'(2) \leq U(2) - U(1)$. Therefore,
\begin{displaymath}
V^n(\zeta) \geq \min{\left( V(\zeta),~ 2U(1) - U(2) \right)},\quad n\geq 2.
\end{displaymath}
The uniform integrability of $\left( V^n\right)^{-}({\zeta}^n)$, $n\geq 2$,
follows now from Lemma \ref{uiOfVminus} and the integrability of $U(1)$ and $U(2)$.

Therefore from the convexity of
$ {V}^n$ and Fatou's lemma we get
\begin{displaymath}\nonumber
    \lim\limits_{n\to\infty}  \int_{\Omega}  {V}^n({\eta}^n)d\mu
    \geq\liminf\limits_{n\to\infty}  \int_{\Omega}  {V}^n({\zeta}^n) d\mu
    \geq\int_{\Omega} {V}(\hat{{\zeta}}) d\mu\geq v(y),
\end{displaymath}
which in view of (\ref{4-6-1}) implies (\ref{5-27-6}).

\underline{\textit{Step 2.}}~~
Here we show how the general case can be reduces to the one in Step 1. Let $\hat{\xi} \set \argmin\limits_{\xi\in\mathcal C(1/2)}\int_{\Omega}
U(\xi)d\mu$ and $\xi_0$ be a strictly positive element of $\mathcal C(1/2)$. Both
$\hat{\xi}$ and $\xi_0$ exist by Lemma \ref{existenceOfPrimalOptimizer} and
assumption (\ref{positivityCD}) respectively. Define
\begin{displaymath}
\zeta \set \max (\hat{\xi}, \xi_0).
\end{displaymath}
Then $\zeta \in\mathcal C$ and $\int_{\Omega}U(\zeta)d\mu$ is finite.
Let
\begin{displaymath}
\begin{array}{rcl}
\tilde U(x) &\set& U(\zeta x),\\
\mathcal{\tilde C}(x) &\set& \left\{ \xi:~\xi\zeta\in\mathcal
  C(x)\right\},
\end{array}
\end{displaymath}
then
\begin{displaymath}
u(x) = \sup\limits_{\xi\in\mathcal{\tilde C}(x)}\int_{\Omega}\tilde
U(\xi)d\mu,\quad x>0.
\end{displaymath}
Similarly, define
\begin{displaymath}
\begin{array}{rcl}
\tilde V(y) &\set& V\left(y/\zeta\right),\\
\mathcal{\tilde D}(y) &\set& \left\{ \eta:~\eta/\zeta\in\mathcal D(y)\right\},
\end{array}
\end{displaymath}
then we have
\begin{displaymath}
v(y) = \inf\limits_{\eta\in\mathcal{\tilde D}(y)}\int_{\Omega}\tilde
V(\eta)d\mu,\quad y>0.
\end{displaymath}
Observe that $\tilde U$ satisfies assumption \ref{Assumption2}, $\tilde V$ is
the conjugate function to $\tilde U$, whereas the sets $\mathcal{\tilde C}(1)$
and $\mathcal{\tilde D}(1)$ satisfy the bipolar relations (\ref{polarityCD})
and (\ref{positivityCD}). Moreover,
\begin{displaymath}
1\in\mathcal{\tilde C}(1)\quad{\rm and}\quad\int_{\Omega}\tilde U(1)d\mu>-\infty.
\end{displaymath}
Now (\ref{58121}) follows from Step 1.

\end{proof}

\begin{proof}[Proof of Theorem \ref{mainTheorem2}]
Observe that by Lemmas \ref{existenceOfPrimalOptimizer} and
\ref{ExistenceOfDualOptimizer} both functions $u$ and $-v$ are
strictly concave. Thus, conjugacy relations (\ref{biconjugacy2})
follow from Lemma \ref{conjugacy} and Theorem 12.2 in Rockafellar
\cite{Rok} (if we extend $u$ by the value $-\infty$ on $(-\infty,0]$). In
turn, the strict concavity of $u$ and $-v$, (\ref{biconjugacy2}), and
Theorem 26.3 in \cite{Rok} imply differentiability of $u$ and $v$
everywhere in their domains.

Fix $x>0$ and take $y=u'(x).$ Let $\hat{\eta} \in \mathcal{D}(y)$ be the
optimizer to the dual problem (\ref{dualProblem2}) and $\hat{\xi} \in
\mathcal{C}(x)$ be the optimizer to the primal problem
(\ref{primalProblem2}). Both $\hat{\eta}$ and $\hat{\xi}$ exist by Lemmas
\ref{ExistenceOfDualOptimizer} and \ref{existenceOfPrimalOptimizer}
respectively.  Using the definition of $ {V},$
(\ref{polarityCD}), (\ref{defCxDy}), and Theorem 23.5 in \cite{Rok} we get
\begin{equation}\nonumber
  0\leq  \int_{\Omega}\left(  {V}\left(\hat{\eta}\right) -  {U}\left(\hat{\xi}\right) +
    \hat{\xi}\hat{\eta}\right)d\mu
  \leq  v(y) - u(x) + xy=0.
\end{equation}
Therefore, for $\mu$ a.e. $\omega\in\Omega$  we have
\begin{equation}\nonumber
   {V}\left(\hat{\eta}\right) =  {U}\left(\hat{\xi}\right) -
  \hat{\xi}\hat{\eta}.
\end{equation}
This implies the remaining assertions of the theorem:
\begin{displaymath}
\begin{array}{c}
     {U}'\left(\hat{\xi}\right) = \hat{\eta}
     \quad \mu~{\rm a.e.},\\
    \langle \hat{\xi},\hat{\eta}\rangle =
    \int_{\Omega}  {U}\left(\hat{\xi}\right)d\mu -
    \int_{\Omega}  {V}\left(\hat{\eta}\right)d\mu
    =u(x) - v(y)
    = xy.
\end{array}
\end{displaymath}
\end{proof}

In order to prove Theorem
\ref{secondTheorem2} we proceed in a way that is similar to the proof of Proposition 1
in Kramkov and Schachermayer \cite{KS2003}.
Define the \textit{polar} of a set $A\subseteq \mathbf{L}^0_{+}$ as
\begin{equation}\nonumber
  A^o\set\left\{ \xi\in \mathbf{L}^0_{+}:~ \langle \xi,\eta \rangle \leq 1~~{\rm
      for~all~} \eta\in  A \right\}.
\end{equation}
A subset $A$
of $\mathbf L^0_{+}$ is called \textit{solid} if $0\leq {\eta}\leq {\zeta}$ and
${\zeta}\in A$ implies that ${\eta}\in A$.
Observe that the sets $\mathcal C$ and $\mathcal D$ satisfy the bipolar relations.
We will use a version of the \textit{bipolar theorem} that was proven by Brannath
and Schachermayer in \cite{BranSchach}: for a subset $A$ of $\mathbf{L}^0_{+}$ the bipolar
$A^{oo}$ is the smallest subset of $\mathbf{L}^0_{+}$ containing $A,$
which is convex, solid, and closed with respect to the topology of
convergence in measure.

\begin{Lemma}\label{Lemma2}
  Under the conditions of Theorem \ref{mainTheorem2}, for every fixed
  $y>0$ let $\hat{{\eta}}(y)$ be the minimizer to the dual problem
  (\ref{dualProblem2}). Then there exists a sequence $\left(
    {\zeta}^n\right)_{n\geq 1}$ in $\tilde{\mathcal{D}}$ that  $\mu$ a.e. converges
  to $\hat{{\eta}}(y)/y$.
\end{Lemma}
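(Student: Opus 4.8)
The plan is to show that $\bar\eta:=\hat{\eta}(y)/y$, which lies in $\mathcal{D}$ because $\hat{\eta}(y)\in\mathcal{D}(y)=y\mathcal{D}$, belongs to the bipolar $\tilde{\mathcal{D}}^{oo}$; to extract from the bipolar theorem a sequence in the solid convex hull of $\tilde{\mathcal{D}}$ converging to $\bar\eta$ in measure; and then to promote this to an a.e.-convergent sequence of genuine elements of $\tilde{\mathcal{D}}$ by combining Koml\'os' lemma (Lemma \ref{Komlos}) with the uniqueness of the dual minimizer (Lemma \ref{ExistenceOfDualOptimizer}).

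First I would establish that $\tilde{\mathcal{D}}^{oo}=\mathcal{D}$; by the polarity relations (\ref{polarityCD}) this is equivalent to $\tilde{\mathcal{D}}^{o}=\mathcal{C}$. The inclusion $\mathcal{C}=\mathcal{D}^o\subseteq\tilde{\mathcal{D}}^o$ is immediate from $\tilde{\mathcal{D}}\subseteq\mathcal{D}$. For the reverse inclusion, write $\sigma(\xi):=\sup_{\eta\in\mathcal{D}}\langle\xi,\eta\rangle$ and $\tilde\sigma(\xi):=\sup_{\eta\in\tilde{\mathcal{D}}}\langle\xi,\eta\rangle$, so that $\mathcal{C}=\{\sigma\le 1\}$, $\tilde{\mathcal{D}}^o=\{\tilde\sigma\le 1\}$, and $\tilde\sigma\le\sigma$. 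Condition (\ref{11191}) says $\tilde\sigma=\sigma$ on $\mathcal{C}$; by positive homogeneity and the fact that $\xi/\sigma(\xi)\in\mathcal{C}$ this upgrades to $\tilde\sigma=\sigma$ at every $\xi$ with $0<\sigma(\xi)<\infty$. Taking $\xi\in\tilde{\mathcal{D}}^o$, I must rule out $\sigma(\xi)=\infty$: using a strictly positive $\xi_0\in\mathcal{C}$ furnished by (\ref{positivityCD}), the truncations $\xi\wedge(k\xi_0)$ satisfy $\sigma(\xi\wedge(k\xi_0))\le k\sigma(\xi_0)\le k<\infty$, increase to $\sigma(\xi)$ by monotone convergence, and being finite satisfy $\tilde\sigma(\xi\wedge(k\xi_0))=\sigma(\xi\wedge(k\xi_0))$, which is at most $\tilde\sigma(\xi)\le 1$; hence $\sigma(\xi)\le 1$ and $\xi\in\mathcal{C}$. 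Thus $\tilde{\mathcal{D}}^o=\mathcal{C}$, and applying the polar once more, $\tilde{\mathcal{D}}^{oo}=\mathcal{C}^o=\mathcal{D}$. I expect this homogenization-plus-truncation step, in particular handling $\sigma(\xi)=\infty$ \emph{without} assuming a constant lies in $\mathcal{C}$, to be the main obstacle.

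Since $\bar\eta\in\mathcal{D}=\tilde{\mathcal{D}}^{oo}$, the bipolar theorem of Brannath and Schachermayer \cite{BranSchach}, which identifies $\tilde{\mathcal{D}}^{oo}$ with the smallest convex, solid, closed-in-measure set containing $\tilde{\mathcal{D}}$, hence with the closure in measure of its solid convex hull, yields a sequence $f^n$ in the solid convex hull of $\tilde{\mathcal{D}}$ with $f^n\to\bar\eta$ in measure; passing to a subsequence I may assume $f^n\to\bar\eta$ $\mu$ a.e. For each $n$, solidity provides $g^n\in\mathrm{conv}(\tilde{\mathcal{D}})$ with $0\le f^n\le g^n$, and since $\tilde{\mathcal{D}}$ is closed under (countable, hence finite) convex combinations, $g^n\in\tilde{\mathcal{D}}$. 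Applying Lemma \ref{Komlos} to $(g^n)\subset\mathcal{D}$ produces convex combinations $\zeta^n\in\mathrm{conv}(g^n,g^{n+1},\dots)\subseteq\tilde{\mathcal{D}}$ converging $\mu$ a.e. to some $\hat g\in\mathcal{D}$. Applying the \emph{same} convex weights to the $f^k$ gives $\tilde\zeta^n\le\zeta^n$ with $\tilde\zeta^n\to\bar\eta$ $\mu$ a.e. (convex combinations drawn from the tail of an a.e.-convergent sequence retain its limit), whence $\hat g\ge\bar\eta$ $\mu$ a.e.

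It remains to upgrade $\hat g\ge\bar\eta$ to $\hat g=\bar\eta$. Because $V(\omega,\cdot)$ is nonincreasing and $y\hat g\ge y\bar\eta=\hat{\eta}(y)$, we have $V(y\hat g)\le V(\hat{\eta}(y))$ pointwise, so $\int_{\Omega} V(y\hat g)\,d\mu\le v(y)$; as $y\hat g\in\mathcal{D}(y)$, the reverse inequality holds by the definition of $v(y)$, so $y\hat g$ is itself a dual minimizer. The uniqueness in Lemma \ref{ExistenceOfDualOptimizer} then forces $y\hat g=\hat{\eta}(y)$, i.e. $\hat g=\bar\eta$. Consequently $\zeta^n\in\tilde{\mathcal{D}}$ and $\zeta^n\to\hat g=\hat{\eta}(y)/y$ $\mu$ a.e., which is the assertion.
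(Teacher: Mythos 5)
Your proposal is correct and follows essentially the same route as the paper: identify $\mathcal{D}$ with $\tilde{\mathcal{D}}^{oo}$ via \eqref{11191} and the Brannath--Schachermayer bipolar theorem, extract an a.e.-convergent sequence in $\tilde{\mathcal{D}}$ whose limit dominates $\hat{\eta}(y)/y$, and use optimality (uniqueness of the dual minimizer) to force equality. The paper states these steps tersely; you have merely supplied the details it leaves implicit (the truncation argument for $\tilde{\mathcal{D}}^{o}=\mathcal{C}$ and the Koml\'os/solid-hull construction of the dominating sequence), which are all sound.
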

\begin{proof} Fix $y>0.$ By assumption $\tilde{\mathcal{D}}$ is a convex set that satisfies
 (\ref{11191}). Therefore, applying the bipolar theorem (see \cite{BranSchach}) we deduce that $\mathcal{D}$ is the smallest
convex, closed and solid subset of $\mathbf{L}^0_{+}\left(\Omega,
\mathcal F, \mu\right)$ containing $\tilde{\mathcal{D}}$. Thus for
any ${\eta}\in\mathcal{D}$ there exists a sequence $\left({\zeta}^n\right)_{n\geq 1}$ in
$\tilde{\mathcal{D}}$ such that ${\zeta}=\lim\limits_{n\to\infty}{\zeta}^n$
exists $\mu$ a.e. and ${\zeta}\geq {\eta}$. In particular such a sequence
exists for ${\eta}=\hat{{\eta}}(y)/y$. We deduce from optimality of
$\hat{{\eta}}(y)$ that ${\eta}={\zeta}=\lim\limits_{n\to\infty}{\zeta}^n$.
\end{proof}


\begin{Lemma}\label{finitenessOverTildeY}
  Under the conditions of Theorem \ref{mainTheorem2} for each $y>0$
  we have
  \begin{equation}\nonumber
    \inf\limits_{{\eta}\in \mathcal{\tilde{D}}
    }\int_{\Omega}{V}(y{\eta})d\mu < \infty.
  \end{equation}

\end{Lemma}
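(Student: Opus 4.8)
The plan is to run the dual objective $\int_{\Omega}V(y\eta)d\mu$ along the approximating sequence furnished by Lemma~\ref{Lemma2} and to show that it stays bounded above. Let $\hat{\eta}(y)\in\mathcal{D}(y)$ be the dual minimizer from Lemma~\ref{ExistenceOfDualOptimizer}, so that $\int_{\Omega}V(\hat{\eta}(y))d\mu=v(y)<\infty$ and, in particular, $\int_{\Omega}V^{+}(\hat{\eta}(y))d\mu<\infty$. By Lemma~\ref{Lemma2} there is a sequence $(\zeta^n)_{n\geq 1}\subset\tilde{\mathcal{D}}$ with $\zeta^n\to\hat{\eta}(y)/y$ $\mu$~a.e. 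Since $y\zeta^n\in\mathcal{D}(y)$, we always have $\int_{\Omega}V(y\zeta^n)d\mu\geq v(y)>-\infty$, so the lemma follows once I establish $\limsup_n\int_{\Omega}V(y\zeta^n)d\mu\leq v(y)$: this forces $\int_{\Omega}V(y\zeta^n)d\mu<\infty$ for large $n$, and since $\zeta^n\in\tilde{\mathcal{D}}$, the infimum over $\tilde{\mathcal{D}}$ is then finite.

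First I would record that $\hat{\eta}(y)>0$ $\mu$~a.e. Because $u'$ is, by Theorem~\ref{mainTheorem2}, a continuous strictly decreasing bijection of $(0,\infty)$, I may choose $x$ with $u'(x)=y$; the dual relation $\hat{\eta}(y)=U'(\hat{\xi}(x))$, the finiteness $\hat{\xi}(x)<\infty$ $\mu$~a.e., and the strict positivity of $U'$ on $[0,\infty)$ then give $\hat{\eta}(y)>0$ $\mu$~a.e. Consequently $V$ is continuous at $\hat{\eta}(y)$ and $V(y\zeta^n)\to V(\hat{\eta}(y))$ $\mu$~a.e. The negative parts are harmless: Lemma~\ref{uiOfVminus} shows that $(V^{-}(h))_{h\in\mathcal{D}(y)}$ is uniformly integrable, so by Vitali's theorem $\int_{\Omega}V^{-}(y\zeta^n)d\mu\to\int_{\Omega}V^{-}(\hat{\eta}(y))d\mu<\infty$.

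The entire difficulty is therefore the upper bound $\limsup_n\int_{\Omega}V^{+}(y\zeta^n)d\mu\leq\int_{\Omega}V^{+}(\hat{\eta}(y))d\mu$, and this is the main obstacle. In contrast with the negative parts, the family $(V^{+}(h))_{h\in\mathcal{D}(y)}$ need not be uniformly integrable — it is exactly the possible escape of mass of $V^{+}$ near the origin that the hypothesis $v(y)<\infty$ is meant to rule out — so the a.e.\ convergence $V^{+}(y\zeta^n)\to V^{+}(\hat{\eta}(y))$ does not transfer to the integrals for free. To control it I would exploit the convexity of $V$ through the tangent estimate $V(y\zeta^n)\leq V(\hat{\eta}(y))+I(y\zeta^n)\bigl(\hat{\eta}(y)-y\zeta^n\bigr)$, where $I=-V'=(U')^{-1}$, and split the domain into $\{y\zeta^n\geq\hat{\eta}(y)\}$, on which the correction term is nonpositive and $V^{+}(y\zeta^n)\leq V^{+}(\hat{\eta}(y))$, and its complement, on which $I(y\zeta^n)\bigl(\hat{\eta}(y)-y\zeta^n\bigr)\to 0$ $\mu$~a.e.\ because $y\zeta^n\to\hat{\eta}(y)>0$. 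To turn this a.e.\ convergence of the correction into convergence of its integral I would pass through the truncated conjugates $V^{m}(z)=\sup_{0<x\leq m}\bigl(U(x)-xz\bigr)$ of Lemma~\ref{conjugacy}, whose marginal $-(V^{m})'$ is bounded by $m$ and whose positive part is dominated by the fixed function $U^{+}(m)$; for fixed $m$ this gives an integrable domination on the relevant sets and hence, by reverse Fatou, $\limsup_n\int_{\Omega}(V^{m})^{+}(y\zeta^n)d\mu\leq\int_{\Omega}(V^{m})^{+}(\hat{\eta}(y))d\mu\leq\int_{\Omega}V^{+}(\hat{\eta}(y))d\mu$, after which I would let $m\to\infty$ using $V^{m}\uparrow V$ and monotone convergence. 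Making the interchange of the limits in $n$ and $m$ rigorous — equivalently, producing a single element of $\tilde{\mathcal{D}}$, possibly a countable convex combination of the $\zeta^n$ (legitimate since $\tilde{\mathcal{D}}$ is closed under such combinations), on which $\int_{\Omega}V^{+}(y\,\cdot\,)d\mu$ is finite — is the delicate point of the whole argument, and is where I expect the main work to lie.
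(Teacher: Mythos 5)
There is a genuine gap, and you have put your finger on it yourself: everything hinges on the unproved claim $\limsup_n\int_{\Omega}V^{+}(y\zeta^n)d\mu\leq\int_{\Omega}V^{+}(\hat{\eta}(y))d\mu$ along the sequence supplied by Lemma~\ref{Lemma2}, and that claim is not merely delicate but can fail outright. Lemma~\ref{Lemma2} gives only $\mu$-a.e.\ convergence, with no control whatsoever on how small the $\zeta^n$ may be on exceptional sets; if, say, $U(\infty)=+\infty$ (so $V(0)=+\infty$) and each $\zeta^n$ vanishes on a set of measure $1/n$, then $\int_{\Omega}V^{+}(y\zeta^n)d\mu=+\infty$ for \emph{every} $n$ even though $\zeta^n\to\hat{\eta}(y)/y$ a.e. Your proposed repair through the truncated conjugates $V^m$ does not close this: for fixed $m$ the reverse Fatou step is fine (modulo the fact that the dominating function $U^{+}(m)$ need not be integrable, since constants are not assumed to lie in $\mathcal{C}$ in the abstract setting), but the conclusion you need requires interchanging $\lim_{m}$ with $\limsup_n$, which is exactly the unjustified step. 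Note also that your intermediate target would give $\inf_{\eta\in\tilde{\mathcal{D}}}\int_{\Omega}V(y\eta)d\mu\leq v(y)$ directly, i.e.\ essentially all of Theorem~\ref{secondTheorem2}; the paper needs this very lemma as an ingredient to prove that theorem, which is a strong hint that the lemma cannot be obtained by running the objective along the Lemma~\ref{Lemma2} sequence.

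The paper's proof avoids the issue by aiming only at finiteness, not optimality, and by gluing rather than passing to a limit. It takes weights $a^n>0$ with $\sum_n a^n=1$, the dual minimizers $\hat{\eta}(a^n)$ at the levels $a^n$, and thresholds $\delta_n\downarrow 0$ chosen so that $\sum_n\int_{A_n}V(\hat{\eta}(a^n))d\mu<\infty$ whenever $\mu(A_n)\leq\delta_n$. Lemma~\ref{Lemma2} is used only to produce, for each $n$, an element $\eta^n\in\tilde{\mathcal{D}}$ with $V(a^n\eta^n)\leq V(\hat{\eta}(a^n))+1$ outside a set of measure at most $\delta_{n+1}$. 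Disjointifying these good sets yields a partition $(A_n)_{n\geq 1}$ of $\Omega$ with $\mu(A_n)\leq\delta_n$ for $n\geq 2$, and the single element $\eta=\sum_n a^n\eta^n\in\tilde{\mathcal{D}}$ (here the closedness of $\tilde{\mathcal{D}}$ under countable convex combinations is used) satisfies, on each $A_n$, $V(\eta)\leq V(a^n\eta^n)\leq V(\hat{\eta}(a^n))+1$ because $V$ is decreasing and $\eta\geq a^n\eta^n$; summing over $n$ gives $\int_{\Omega}V(\eta)d\mu<\infty$. The decisive ideas you are missing are (i) that it suffices to exhibit one element of $\tilde{\mathcal{D}}$ with finite value, and (ii) that the monotone decrease of $V$ lets a countable convex combination be bounded above, piece by piece on a well-chosen partition, by the values of the individual summands.
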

\begin{proof} To simplify notations we will assume that $y=1$. Let
$\left({a}^n\right)_{n\geq 1}$ be a sequence of strictly positive
 numbers such that $\sum\limits_{n=1}^{\infty}{a}^n = 1$.  By
Lemma \ref{ExistenceOfDualOptimizer}, for each $n\geq 1$ there
exists $\hat{\eta}({{a}^n}),$ the minimizer to the dual problem
(\ref{dualProblem2}) when $y={a}^n$.
One can construct a sequence of strictly positive numbers
$\left(\delta_n\right)_{n\geq 2}$ that decreases to $0$, such that
\begin{equation}\label{aka29}
    \sum\limits_{n=1}^{\infty} \int_{\Omega}{V}\left(\hat{\eta}({a}^n)\right)1_{A_n}d\mu < \infty,
     {\rm ~~if} ~
      A_n\in \mathcal{F}, {\rm and~}\mu(A_n)\leq
      \delta_n,~n\geq 2.
\end{equation}
From Lemma \ref{Lemma2} we deduce the existence of a sequence
$\left({\eta}^n\right)_{n\geq 1}\subset \tilde{\mathcal{D}}$ such that
\begin{equation}\nonumber
  \mu\left({V}\left({a}^n{\eta}^n\right) > {V}\left(\hat{\eta}({a}^n)\right) +  1\right) \leq
  \delta_{n+1}, \hspace{2 mm} n\geq 1.
\end{equation}
Define the sequences of measurable sets $\left(B_n\right)_{n\geq 1}$
and $\left(A_n\right)_{n\geq 1}$ as follows:
\begin{equation}\nonumber
  B_n \set \left\{{V}\left({a}^n {\eta}^n \right) \leq
    {V}\left(\hat{\eta}({a}^n) \right) + 1
  \right\}, \hspace{3mm} n\geq 1,
\end{equation}
\begin{equation}\nonumber
    A_1\set B_1,
     \dots ,
    A_n\set B_n
    \backslash \left( \bigcup\limits_{k=1}^{n-1} A_k\right),\dots.
 \end{equation}
Then $\left(A_n\right)_{n\geq 1}$ is a measurable partition of
$\Omega$ and $\mu\left(A_n\right)\leq \delta_n$ for $n\geq 2$.

To finish the proof, let ${\eta} \set
\sum\limits_{n=1}^{\infty}{a}^n{\eta}^n.$ Then
${\eta}\in\mathcal{\tilde{D}}$, since $\mathcal{\tilde{D}} $ is closed
under countable convex combinations.
From the construction of $\left(A_n\right)_{n\geq 1}$,
monotonicity of ${V}$, and (\ref{aka29}) we obtain
\begin{equation}\nonumber
  \begin{array}{rcl}
    \int_{\Omega}{V}({\eta})d\mu&=& \sum\limits_{n=1}^{\infty}\int_{\Omega}{V}\left(\sum\limits_{j=1}^{\infty}{a}^j{\eta}^j\right)1_{A_n}
    d\mu\\
    &\leq& \sum\limits_{n=1}^{\infty}\int_{\Omega}
    {V}\left({a}^n {\eta}^n\right)1_{A_n}
    d\mu\\
    &\leq& \sum\limits_{n=1}^{\infty}\int_{\Omega} {V}\left(\hat{\eta}({a}^n)\right)1_{A_n}
    d\mu + \mu(\Omega)\\
    &<& \infty.\\
  \end{array}
\end{equation}
This concludes the proof of the lemma.
\end{proof}

\begin{proof}[Proof of Theorem \ref{secondTheorem2}] By symmetry between
  the primal and dual problems, it suffices to prove that
  \begin{equation}\nonumber
    v(y)= \inf\limits_{{\eta}\in \mathcal{\tilde{D}} }\int_{\Omega}
    {V}(y{\eta})d\mu,\quad y>0.
  \end{equation}
Fix $y >0$ and $\epsilon > 0$. We will show that there exists ${\eta}
\in \mathcal{\tilde{D}} $ such that
\begin{equation}\nonumber
  \int_{\Omega}{V}\left((y+\epsilon ){\eta}\right)d\mu
  \leq
  v(y) + \epsilon.
\end{equation}
Let $\hat{\eta}\in\mathcal{D}(y)$ be the minimizer to the dual
problem (\ref{dualProblem2}), ${\zeta}$ be an element of
$\mathcal{\tilde{D}} $, such that $$\int_{\Omega}
{V}\left({\epsilon \zeta}\right)d\mu < \infty,$$ whose existence
follows from Lemma \ref{finitenessOverTildeY}. Let $\delta > 0$ be  such that
\begin{displaymath}
    \int_{\Omega}
    \left(
      \left|{V}\left(\hat{\eta}\right)\right|+
      \left|{V}\left({\epsilon \zeta}\right)\right|
    \right)1_{A}
    d\mu\leq \frac{\epsilon}{2},
      {\rm ~~if}~
    A\in \mathcal{F} ~{\rm with}~
    \mu(A)\leq \delta.
\end{displaymath}
By Lemma \ref{Lemma2} there exists $\theta \in
\tilde{\mathcal{D}}$ such that the set
\begin{equation}\nonumber
  B\set \left\{
    {V}\left({y \theta}\right) >
    {V}\left(\hat{\eta}\right) + \frac{\epsilon}{2\mu(\Omega)}
  \right\}
\end{equation}
has measure $\mu(B)\leq \delta$. Define
\begin{displaymath}
 {\eta}\set \frac{{y \theta} + \epsilon
\zeta}{y + \epsilon}.
\end{displaymath}
Since $\tilde{\mathcal{D}}$ is convex it follows that ${\eta}\in
\tilde{\mathcal{D}}.$ By construction of the set $B$ and monotonicity of ${V}$ we obtain
\begin{equation}\nonumber
  \begin{array}{rcl}
    \int_{\Omega}
    {V}\left((y+ \epsilon) {\eta}\right)
    d\mu &=& \int_{\Omega}
    {V}\left({y \theta} + {\epsilon \zeta}\right)d\mu\\
    &\leq&  \int_{\Omega} {V}\left({y \theta}\right)1_{B^c}d\mu +
    \int_{\Omega}{V}\left({\epsilon \zeta}\right)1_B
    d\mu \\
    &\leq&\frac{\epsilon}{2} +
    \int_{\Omega}{V}\left(\hat{\eta}\right)d\mu+  \int_{\Omega}\left({V}\left({\epsilon \zeta}\right)-{V}\left(\hat{\eta}\right)\right)1_{B}
    d\mu\\
    &\leq& v(y) + \epsilon.
  \end{array}
\end{equation}
\end{proof}

\section{Proofs of the main theorems}\label{pfOfProp1}

\looseness-1 Let us recall the concept of Fatou convergence of stochastic
processes, see \cite{FK}.
\begin{Definition} Let $\tau$ be a dense subset
  of $[0,\infty)$. A sequence of processes $(Y^n)_{n\geq 1}$ is \textit{Fatou
    convergent on $\tau$} to a process $Y$, if   $(Y^n)_{n\geq 1}$ is
  uniformly bounded from below and
\begin{equation}\nonumber
Y_t = \limsup\limits_{s\downarrow
  t,~s\in\tau}\limsup\limits_{n\to\infty}Y^n_s = \liminf\limits_{s\downarrow
  t,~s\in\tau}\liminf\limits_{n\to\infty}Y^n_s
\end{equation}
almost surely for every $t\geq 0$. If $\tau = [0,\infty)$, then the sequence
$(Y^n)_{n\geq 1}$ is called \textit{Fatou convergent}.
\end{Definition}
We also recall that a probability measure $\mathbb Q$ is called an \textit{equivalent
  local martingale measure} for $\mathcal X$, if $\mathbb Q$ is equivalent to $\mathbb P$ and
every $X\in\mathcal X$ is a local martingale under $\mathbb Q$. We denote the
set of equivalent local martingale measures by $\mathcal M^e$.

 The following lemma can be thought as an extension of Theorem 5.12 in
 \cite{Delbaen-Schachermayer1998} to our settings. The proof of Lemma \ref{characterizationOfAdmissibleConsumptionProcess} is
 based on an application of Fatou convergence and the
 optional decomposition theorem, see
\cite{K96, FK}. However, since assumption (\ref{ZisNotEmpty}) is weaker than
the condition $\mathcal M^e\neq \emptyset$ in \cite{K96, FK}, we need to
do extra work.

\begin{Lemma}\label{characterizationOfAdmissibleConsumptionProcess}
Let $c$ be a nonnegative optional process and $\kappa$ be a stochastic clock.
Under the assumptions (\ref{stochasticClock}) and (\ref{ZisNotEmpty}), the following conditions are equivalent:
\begin{enumerate}[(i)]
\item
$
c\in\mathcal A,
$
\item
$
\sup\limits_{Z\in{\mathcal Z}}\mathbb
E\left[\int_0^{\infty}c_tZ_td\kappa_t
\right]\leq 1.
$
\end{enumerate}
\end{Lemma}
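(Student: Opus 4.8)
The plan is to prove the two implications separately; the direction $(i)\Rightarrow(ii)$ is routine stochastic calculus, while $(ii)\Rightarrow(i)$ is the substantive half and rests on an optional decomposition adapted to the deflator set $\mathcal Z$.

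\textbf{The implication $(i)\Rightarrow(ii)$.} Assume $c\in\mathcal A$, so there is a predictable $S$-integrable $H$ with $V_t=1+\int_0^tH_u\,dS_u-\int_0^tc_u\,d\kappa_u\geq0$. Put $X_t:=1+\int_0^tH_u\,dS_u=V_t+\int_0^tc_u\,d\kappa_u$; since $X_t\geq\int_0^tc_u\,d\kappa_u\geq0$, we have $X\in\mathcal X$. Fix $Z\in\mathcal Z$. By definition $XZ$ is a local martingale, hence (being nonnegative) a supermartingale. Writing $C_t:=\int_0^tc_u\,d\kappa_u$ and integrating by parts, one checks that $C_tZ_t-\int_0^tZ_uc_u\,d\kappa_u=\int_0^tC_{u-}\,dZ_u$ is a local martingale: the atoms of $d\kappa$ combine with the jumps of $Z$ to turn $Z_{u-}$ into $Z_u$, while the continuous part $d\kappa^c$ does not charge the countable jump set of $Z$. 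Consequently $Z_tV_t+\int_0^tZ_uc_u\,d\kappa_u=X_tZ_t-\int_0^tC_{u-}\,dZ_u$ is a nonnegative local martingale, thus a supermartingale started at $V_0Z_0=1$. Therefore $\mathbb E\!\left[\int_0^tZ_uc_u\,d\kappa_u\right]\leq1$ for every $t$, and letting $t\to\infty$ by monotone convergence gives $\mathbb E\!\left[\int_0^\infty Z_uc_u\,d\kappa_u\right]\leq1$. Taking the supremum over $Z\in\mathcal Z$ yields $(ii)$.

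\textbf{The implication $(ii)\Rightarrow(i)$: the construction.} I would first reformulate $(i)$ as the existence of some $X\in\mathcal X$ dominating the cumulative consumption, that is $X_t\geq C_t$ for all $t$; indeed, for such an $X=1+\int H\,dS$ the process $V:=X-C$ is a nonnegative value process, so $c\in\mathcal A$. To produce a candidate, set
\[
\widehat W_t:=\esssup_{Z\in\mathcal Z}\frac1{Z_t}\,\mathbb E\!\left[\int_t^\infty Z_uc_u\,d\kappa_u\,\Big|\,\mathcal F_t\right],\qquad X_t:=\widehat W_t+C_t,
\]
the deflated value of the residual consumption plus what has already been consumed. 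Since $\mathcal F_0$ is trivial and $Z_0=1$, assumption $(ii)$ gives $X_0=\widehat W_0=\sup_{Z\in\mathcal Z}\mathbb E\!\left[\int_0^\infty Z_uc_u\,d\kappa_u\right]\leq1$, and $X_t\geq C_t\geq0$ by construction.

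\textbf{The decomposition step and the main obstacle.} The heart of the argument is to show that $X$, after passing to a c\`adl\`ag Fatou-convergent modification, is a supermartingale under deflation, i.e.\ $XZ$ is a supermartingale for every $Z\in\mathcal Z$. The plan is to verify that the family $\{\tfrac1{Z}\mathbb E[\int_\cdot^\infty Z_uc_u\,d\kappa_u\mid\mathcal F_\cdot]\}_{Z\in\mathcal Z}$ is directed upward (using a pasting/concatenation stability of $\mathcal Z$), to realize the essential supremum as an increasing limit, and to use Fatou convergence (the Definition preceding the lemma) to obtain a c\`adl\`ag version along which the deflated supermartingale property survives the limit. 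An optional decomposition theorem then yields $X_t=X_0+\int_0^tH_u\,dS_u-B_t$ with $B$ nonnegative, increasing and optional; adding the constant $1-X_0\geq0$ gives an element of $\mathcal X$ still dominating $C$, whence $c\in\mathcal A$. The principal difficulty is precisely this decomposition: the optional decomposition theorem of \cite{K96, FK} and Theorem 5.12 in \cite{Delbaen-Schachermayer1998} are established under $\mathcal M^e\neq\emptyset$, whereas here only $\mathcal Z\neq\emptyset$ (assumption (\ref{ZisNotEmpty})) is available. Reconciling the two settings---for instance by fixing a reference deflator $Z^0\in\mathcal Z$ and performing a change of num\'eraire so that the cited theorems apply, then transferring the decomposition back---is where the extra work flagged before the statement is required.
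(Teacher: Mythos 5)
Your direction $(i)\Rightarrow(ii)$ is correct and is essentially the paper's argument: the integration-by-parts identity $C_tZ_t-\int_0^tZ_uc_u\,d\kappa_u=\int_0^tC_{u-}\,dZ_u$ and the supermartingale property of the nonnegative local martingale $X Z-\int_0^{\cdot}C_{u-}\,dZ_u$ give exactly what the paper obtains by ``localization and integration by parts.''

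The direction $(ii)\Rightarrow(i)$, however, has a genuine gap, and it sits precisely at the point you flag yourself. Your plan (essential supremum of the deflated residual consumption, directedness, Fatou modification, then optional decomposition) is the right shape, but the cited superhedging and optional decomposition results \cite{K96, FK} require an equivalent local martingale \emph{measure}, and on the infinite horizon assumption (\ref{ZisNotEmpty}) does not provide one: an element $Z\in\mathcal Z$ is a strictly positive martingale but need not be uniformly integrable, so it need not define an equivalent probability on $\mathcal F_\infty$. Your proposed remedy --- fixing a reference deflator $Z^0$ and changing num\'eraire --- is not carried out, and as stated it does not obviously work: a measure change by $Z^0$ runs into exactly the non-closability problem above, while a genuine num\'eraire change alters $S$ and the class $\mathcal X$ and one must then verify that the transformed model admits an ELMM and that admissibility transfers back. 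The paper resolves the difficulty differently and more concretely: it restricts to finite horizons, observing that $\{(Z_t)_{t\in[0,n]}:Z\in\mathcal Z\}$ \emph{is} the set of c\`adl\`ag density processes of equivalent local martingale measures on $(\Omega,\mathcal F_n)$ (each $Z$ is a true martingale with $Z_0=1$, so $Z_n\,d\mathbb P$ is an equivalent probability there). Hence the classical results apply on each $[0,n]$, producing superhedging value processes $V^n_t=\esssup_{\mathbb Q\in\mathcal M^e_n}\mathbb E^{\mathbb Q}[\int_0^nc_u\,d\kappa_u\,|\,\mathcal F_t]$ with decompositions $V^n=V^n_0+\int G^n\,dS-A^n$; these finite-horizon strategies are then glued into a single admissible strategy via Fatou-convergent convex combinations (Lemma 5.2 in \cite{FK}) and a telescoped integrand $\tilde G_t=\sum_{k\le n(t)}(G^k_t-G^{k-1}_t)$. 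Without this finite-horizon reduction (or a fully worked-out substitute), your decomposition step does not go through, so the proof is incomplete as written.
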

\begin{proof}
Let $c\in\mathcal A$. Then there exists a predictable
$S$-integrable process $H$, s.t.
\begin{displaymath}
1+\int_0^t H_udS_u \geq \int_0^tc_ud\kappa_u\geq 0,\quad t\geq 0.
\end{displaymath}
Take an arbitrary $Z\in\mathcal Z$. Using supermartingale property of
$Z_t(1+\int_0^t H_udS_u)$, $t\geq 0$, we obtain for every $T\geq 0$
\begin{displaymath}
1\geq \mathbb E\left[Z_T\left(1+\int_0^T H_udS_u\right)\right] \geq \mathbb E\left[Z_T\int_0^Tc_ud\kappa_u\right].
\end{displaymath}
Using localization and integration by parts we deduce
\begin{displaymath}
\mathbb E\left[Z_T\int_0^Tc_ud\kappa_u\right] = \mathbb
E\left[\int_0^Tc_uZ_ud\kappa_u
\right].
\end{displaymath}
Taking $T\to\infty$ and using the monotone convergence theorem, we get $(ii)$.

Conversely, assume that $\sup\limits_{Z\in{\mathcal Z}}\mathbb
E\left[\int_0^{\infty}c_tZ_td\kappa_t
\right]\leq 1$.
Using localization and integration by parts we deduce from $(ii)$:
\begin{displaymath}
\mathbb E\left[Z_n\int_0^nc_ud\kappa_u\right] = \mathbb
E\left[\int_0^nc_uZ_ud\kappa_u
\right],\quad n\geq 0.
\end{displaymath}

One can see that $\left\{(Z_t)_{t\in[0,n]}:~Z\in\mathcal Z \right\}$
coincides with the set of c\'adl\'ag densities of equivalent local martingale
measures for $\mathcal X$ on
$\left(\Omega, \mathcal F_n\right)$. Let us denote the set of such measures by
$\mathcal M^e_n$.
Then, by Proposition 4.2 in \cite{K96}, there exists a c\'adl\'ag
process $V^n$ on $[0,n]$ given by
\begin{displaymath}
V^n_t=\esssup\limits_{\mathbb Q\in\mathcal M^e_n}\mathbb E^{\mathbb Q}\left[\int_0^nc_ud\kappa_u|\mathcal
F_t\right],\quad t\in[0,n],
\end{displaymath}
which is a supermartingale under every $\mathbb Q\in\mathcal M^e_n$.
Notice that $V^n_t\geq \int_0^tc_ud\kappa_u$, $t\in[0,n]$, and $V^n_0\leq 1$. Now, applying Theorem 4.1 in \cite{FK}, we
can write $V^n$ as
\begin{displaymath}
V^n_t= V^n_0 + \int_0^tH^n_udS_u - A^n_t, \quad t\in[0,n],
\end{displaymath}
where $H^n$ is predictable $S$-integrable and $A^n$ is optional and
increasing, s.t. $A^n_0=0$.  Let us extend $H^n$ to $[0,\infty)$ by setting
$H^n_t \set 0  $ for
$t>n$. Using Lemma 5.2 in \cite{FK}, we can construct a sequence of stochastic
processes $Y^n\in {\rm conv}\left(1+\int_0^{\cdot}H^n_udS_u,
1+\int_0^{\cdot}H^{n+1}_udS_u,\dots\right)$,~$n\geq 1$, and a process $Y$, such that
$(ZY^n)_{n\geq 1}$ is Fatou
    convergent on the set of positive rational numbers to
a supermartingale $ZY$ for every $Z\in\mathcal Z$. Then, we have $Y_t \geq
\int_0^tc_ud\kappa_u$, $t\geq 0$, and $Y_0\leq 1$. Now, on $[0,n]$ using
Theorem 4.1 in \cite{FK}, we get
\begin{displaymath}
Y_t = Y_0 + \int_0^tG^n_udS_u - B^n_t,  \quad t\in[0,n],
\end{displaymath}
where $G^n$ is predictable $S$-integrable and $B^n$ is optional and increasing with
$B^n_0=0$.
Let us set $G^n_t \set 0$ for $t>n$. Denoting
\begin{displaymath}
n(t) \set \min\left\{n\in\mathbb N:~n>t \right\}, \quad t\geq 0,
\end{displaymath}
we deduce that the process
\begin{displaymath}
\tilde G_t\set \sum\limits_{k=1}^{n(t)}\left( G^k_t -
  G^{k-1}_t\right),\quad t\geq 0,
\end{displaymath}
  is such that $1+\int_0^t\tilde G_udS_u\geq
\int_0^tc_ud\kappa_u$, $t\geq 0$. Thus, $c\in\mathcal A$.
\end{proof}

\begin{Lemma}\label{4102}
  Let $\kappa$ be a stochastic clock.
  Under the assumptions
  (\ref{stochasticClock}) and (\ref{ZisNotEmpty}),
for every
  $c\in\mathcal A$ we have
  \begin{equation}\nonumber
    \sup\limits_{Z\in{\mathcal Z} }\mathbb{E}\left[ \int_0^{\infty} c_tZ_td\kappa_t \right] =
    \sup\limits_{Y\in\mathcal{ Y} }\mathbb{E}\left[ \int_0^{\infty} c_tY_td\kappa_t
    \right]\leq 1.
  \end{equation}
\end{Lemma}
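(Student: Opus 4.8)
The plan is to establish the claimed chain by a squeezing argument for the two suprema, obtaining the upper bound $1$ directly from Lemma~\ref{characterizationOfAdmissibleConsumptionProcess}. The key structural observation is that every $Z\in\mathcal{Z}$ is itself an admissible competitor in the definition~\eqref{defOfYkappa} of $\mathcal{Y}=\mathcal{Y}(1)$: a deflator $Z$ is c\`adl\`ag and adapted and trivially satisfies $0\leq Z\leq 1\cdot Z$ $(d\kappa\times\mathbb{P})$-a.e., so $Z\in\mathcal{Y}$. Since the functional $Y\mapsto\mathbb{E}\left[\int_0^\infty c_tY_t\,d\kappa_t\right]$ is then maximized over a larger set on the right-hand side, this immediately yields $\sup_{Z\in\mathcal{Z}}\mathbb{E}\left[\int_0^\infty c_tZ_t\,d\kappa_t\right]\leq\sup_{Y\in\mathcal{Y}}\mathbb{E}\left[\int_0^\infty c_tY_t\,d\kappa_t\right]$. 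It therefore remains to prove the reverse inequality and the bound $1$.

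For the reverse inequality I would first treat the generating family before the closure is taken. If $Y$ is c\`adl\`ag and adapted with $0\leq Y\leq Z$ $(d\kappa\times\mathbb{P})$-a.e.\ for some $Z\in\mathcal{Z}$, then, since $c\geq 0$ and $d\kappa$ is a nonnegative measure, monotonicity gives $\int_0^\infty c_tY_t\,d\kappa_t\leq\int_0^\infty c_tZ_t\,d\kappa_t$ $\mathbb{P}$-a.s., whence $\mathbb{E}\left[\int_0^\infty c_tY_t\,d\kappa_t\right]\leq\mathbb{E}\left[\int_0^\infty c_tZ_t\,d\kappa_t\right]\leq\sup_{Z'\in\mathcal{Z}}\mathbb{E}\left[\int_0^\infty c_tZ'_t\,d\kappa_t\right]$, and this last quantity is $\leq 1$ by Lemma~\ref{characterizationOfAdmissibleConsumptionProcess} applied to $c\in\mathcal{A}$. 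Thus every element of the pre-closure set already satisfies the desired bound.

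To pass to a general $Y\in\mathcal{Y}$, recall that $\mathcal{Y}$ is the closure in the topology of convergence in measure $(d\kappa\times\mathbb{P})$, so I would select a sequence $(Y^n)_{n\geq 1}$ from the generating family with $Y^n\to Y$ in this measure. Because $\kappa_\infty\leq A<\infty$ by~\eqref{stochasticClock} and $\mathbb{P}$ is a probability measure, the product measure $d\kappa\times\mathbb{P}$ on $[0,\infty)\times\Omega$ is finite, so convergence in measure yields a subsequence $(Y^{n_k})_{k\geq 1}$ converging $(d\kappa\times\mathbb{P})$-a.e.\ to $Y$. Viewing $\mathbb{E}\left[\int_0^\infty c_tY_t\,d\kappa_t\right]$ as the integral of the nonnegative function $(t,\omega)\mapsto c_t(\omega)Y_t(\omega)$ against this finite product measure (Tonelli's theorem), Fatou's lemma then gives $\mathbb{E}\left[\int_0^\infty c_tY_t\,d\kappa_t\right]\leq\liminf_k\mathbb{E}\left[\int_0^\infty c_tY^{n_k}_t\,d\kappa_t\right]\leq\sup_{Z\in\mathcal{Z}}\mathbb{E}\left[\int_0^\infty c_tZ_t\,d\kappa_t\right]\leq 1$, where the middle bound comes from the previous step applied to each $Y^{n_k}$. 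Taking the supremum over $Y\in\mathcal{Y}$ closes the squeeze and furnishes the bound $1$.

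The main obstacle is the limiting argument across the closure: one must control the functional under mere convergence in measure. The clean resolution rests on two facts, namely that the integrand is nonnegative and that the base measure $d\kappa\times\mathbb{P}$ is finite; together these let one replace convergence in measure by $(d\kappa\times\mathbb{P})$-a.e.\ convergence along a subsequence and invoke Fatou's lemma. No uniform integrability is required because only an upper (lim inf) estimate is needed, not continuity of the functional. The one routine point to verify is that $\mathbb{E}\left[\int_0^\infty c_tY_t\,d\kappa_t\right]$ genuinely coincides with the product-measure integral, which is Tonelli's theorem for the nonnegative integrand and uses only the optionality (hence joint measurability) of $c$ and $Y$.
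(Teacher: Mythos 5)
Your proof is correct and follows essentially the same route as the paper's: approximate $Y\in\mathcal{Y}$ by elements of the solid hull of $\mathcal{Z}$, bound each approximant via monotonicity and Lemma~\ref{characterizationOfAdmissibleConsumptionProcess}, and pass to the limit with Fatou's lemma. Your version is slightly more explicit about extracting an a.e.\ convergent subsequence from convergence in measure and about the Tonelli identification, but these are the same steps the paper takes implicitly.
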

\begin{proof}
 By definition (\ref{defOfYkappa}) for an arbitrary
$Y\in{\mathcal Y}$ we can find a sequence $\left(Y^n\right)_{n\geq
  1}$ in the solid hull of ${\mathcal Z}$ (i.e., such that $Y^n\leq Z^n$
$\left( d\kappa\times\mathbb P\right)$ a.e. for some $Z^n\in{\mathcal Z}$), such
that $\left(Y^n\right)_{n\geq
  1}$ converges $\left(d\kappa\times\mathbb P\right)$ a.e. to $Y$.
 Using Fatou's lemma and Lemma
\ref{characterizationOfAdmissibleConsumptionProcess} we get
\begin{equation}\nonumber
  \mathbb{E}\left[ \int_0^{\infty} c_tY_td\kappa_t
  \right]
  \leq \liminf\limits_{n\to\infty}\mathbb{E}\left[ \int_0^{\infty}
    c_tY^n_td\kappa_t \right] \leq
  \sup\limits_{Z\in{\mathcal Z} }\mathbb{E}\left[ \int_0^{\infty} c_tZ_td\kappa_t
  \right]
  \leq 1.
\end{equation}
\end{proof}

Denote by $\mathbf{L}^0=\mathbf{L}^0\left(
  d\kappa\times\mathbb{P}\right)$ the linear space of (equivalence classes of)
real-valued optional processes on
the stochastic basis $\left(\Omega, \mathcal{F},
  \left(\mathcal{F}_t\right)_{t\geq 0}, \mathbb{P}\right)$ which we equip
with the topology of
convergence in measure $\left(d\kappa\times \mathbb P\right)$. Let
$\mathbf{L}^0_{+}$ be the positive orthant of $\mathbf{L}^0.$ Recall that a
\textit{polar} of a set $A\subseteq \mathbf{L}^0_{+}$ is defined as:
\begin{equation}\nonumber
  A^o\set\left\{ Y\in \mathbf{L}^0_{+}:\hspace{1mm} \mathbb{E}\left[\int_0^{\infty} c_tY_td\kappa_t \right] \leq 1~{\rm
      for\hspace{1mm} all} \hspace{1mm} c\in  A \right\}.
\end{equation}
In view of Theorems \ref{mainTheorem2} and \ref{secondTheorem2} in order to
complete the proofs of Theorems \ref{mainTheorem} and \ref{secondTheorem} it
suffices to establish the following proposition. Note that the sets
$\mathcal C$, $\mathcal D$ and measure $\mu$
correspond to the sets $\mathcal A$, ${\mathcal Y}$ and measure $\left(
  d\kappa\times\mathbb P\right)$,
 the sets $\tilde{\mathcal C}$ and $\tilde{\mathcal
   D}$ accord with the sets ${\mathcal B}$ and $\mathcal Z$, respectively.

\begin{Proposition}\label{prop1}
  Assume that an $\mathbb{R}^d$-valued semimartingale $S$ satisfies
  (\ref{ZisNotEmpty}).
  Under the condition (\ref{stochasticClock}), the sets $\mathcal{A}$ and
  $\mathcal{Y},$ defined in (\ref{defOfAcal}) and
  (\ref{defOfYkappa}), respectively, have the following properties:

  (i) $\mathcal{A}$ and $\mathcal{Y}$ are subsets of
  $\mathbf{L}^0_{+}$ that are convex, solid and closed in the topology
  of convergence in measure $\left(d\kappa\times\mathbb P\right).$

  (ii) The sets $\mathcal{A}$ and $\mathcal{Y}$
  satisfy the bipolar relations:
  \begin{equation}\nonumber
    \begin{array}{lclcl}
      c\in\mathcal{A}&{\Leftrightarrow}& \mathbb{E}\left[ \int_0^{\infty}c_tY_td\kappa_t \right]\leq 1& {\rm for
        ~all} & Y\in\mathcal{Y}, \\
      Y\in\mathcal{Y}&{\Leftrightarrow}& \mathbb{E}\left[\int_0^{\infty} c_tY_t d\kappa_t\right]\leq 1& {\rm for
        ~all}& c\in\mathcal{A}.\\
    \end{array}
  \end{equation}

  (iii)
There exists $c\in\mathcal{A}$ such that $c>0$ and there
  exists $Y\in\mathcal{Y}$ such that $Y>0.$
\end{Proposition}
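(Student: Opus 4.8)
The heart of the matter is already contained in Lemmas~\ref{characterizationOfAdmissibleConsumptionProcess} and~\ref{4102}, which together identify $\mathcal{A}$ as the polar of $\mathcal{Y}$; what remains is topological bookkeeping together with one application of the bipolar theorem of Brannath and Schachermayer~\cite{BranSchach}. Concretely, the plan is to first verify the structural properties in (i), then deduce the first polarity relation in (ii) directly from the two lemmas, then obtain the second relation by polarizing and invoking the bipolar theorem, and finally exhibit the strictly positive elements required in (iii).

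For (i), I would treat $\mathcal{A}$ through the characterization $c\in\mathcal{A}\Leftrightarrow \sup_{Z\in\mathcal{Z}}\mathbb{E}\left[\int_0^\infty c_t Z_t\,d\kappa_t\right]\leq 1$ furnished by Lemma~\ref{characterizationOfAdmissibleConsumptionProcess}. Since $c\mapsto \mathbb{E}\left[\int_0^\infty c_t Z_t\,d\kappa_t\right]$ is linear, the map $c\mapsto\sup_{Z}(\cdots)$ is convex and monotone, so its sublevel set $\mathcal{A}$ is convex and solid; closedness follows by passing to an $(d\kappa\times\mathbb{P})$-a.e.\ convergent subsequence of a convergent sequence in $\mathcal{A}$ and applying Fatou's lemma for each fixed $Z$. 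For $\mathcal{Y}$, closedness holds by definition, as $\mathcal{Y}$ is a closure. Convexity reduces to convexity of $\mathcal{Z}$ --- a convex combination of equivalent martingale deflators is again one, since it is a strictly positive c\`adl\`ag martingale starting at $1$ whose product with any $X\in\mathcal{X}$ is a convex combination of local martingales --- which is inherited by the solid hull and then by the closure. Solidity is the one point needing care: if $Y$ lies in the closure with $Y=\lim_n Y^n$ a.e., $0\le Y^n\le Z^n\in\mathcal{Z}$, and $0\le Y'\le Y$, then $Y'\wedge Y^n\le Z^n$ converges a.e.\ to $Y'$, so $Y'\in\mathcal{Y}$.

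For (ii), the first relation is immediate: if $c\in\mathcal{A}$ then Lemma~\ref{4102} gives $\sup_{Y\in\mathcal{Y}}\mathbb{E}\left[\int_0^\infty c_tY_t\,d\kappa_t\right]\le 1$, while conversely the inequality for all $Y\in\mathcal{Y}$ forces it for all $Z\in\mathcal{Z}$ (as $\mathcal{Z}\subseteq\mathcal{Y}$), whence $c\in\mathcal{A}$ by Lemma~\ref{characterizationOfAdmissibleConsumptionProcess}. This says precisely $\mathcal{A}=\mathcal{Y}^o$, so $\mathcal{A}^o=\mathcal{Y}^{oo}$. By part~(i) the set $\mathcal{Y}$ is convex, solid and closed in measure, so the bipolar theorem yields $\mathcal{Y}^{oo}=\mathcal{Y}$; combined with the inclusion $\mathcal{Y}\subseteq\mathcal{A}^o$ (again Lemma~\ref{4102}) this is the second relation.

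Finally, for (iii): the constant process $c\equiv 1/A$ together with $H\equiv 0$ gives value process $1-\kappa_t/A\ge 1-\kappa_\infty/A\ge 0$ by~\eqref{stochasticClock}, so $c\in\mathcal{A}$ and $c>0$; and any $Z\in\mathcal{Z}$ --- nonempty by~\eqref{ZisNotEmpty} --- is strictly positive and lies in $\mathcal{Y}$, supplying the required $Y>0$. The only genuinely hard input is the identity $\mathcal{A}=\mathcal{Y}^o$, and this rests on the nontrivial optional-decomposition argument already carried out in Lemma~\ref{characterizationOfAdmissibleConsumptionProcess}; within the present proposition the most delicate step is verifying that the defining closure of $\mathcal{Y}$ preserves solidity so that the bipolar theorem applies.
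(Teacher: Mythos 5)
Your proposal is correct and follows essentially the same route as the paper: both reduce everything to Lemma~\ref{characterizationOfAdmissibleConsumptionProcess} and Lemma~\ref{4102} plus one application of the Brannath--Schachermayer bipolar theorem, differing only in bookkeeping (you prove $\mathcal{A}=\mathcal{Y}^o$ first and then take bipolars, while the paper goes through $\mathcal{A}=\mathcal{Z}^o$ and $\mathcal{Y}=\mathcal{Z}^{oo}=\mathcal{A}^o$). Your explicit verifications of the solidity and convexity of $\mathcal{Y}$ and the concrete witness $c\equiv 1/A$ in (iii) are details the paper leaves implicit, and they are carried out correctly.
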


\begin{proof} \textit{(i)} It is enough to show closedness of
$\mathcal{A}.$ Let $\left(c^n\right)_{n\geq 1}$ be a
sequence in $\mathcal{A}$ that $\left(d\kappa\times\mathbb P\right)$ a.e. converges to $~c$.
  For an arbitrary
$Z\in\mathcal{Z}$ using Fatou's lemma and
Lemma \ref{characterizationOfAdmissibleConsumptionProcess} we get:
\begin{equation}\nonumber
  \mathbb{E}\left[\int_0^{\infty}c_tZ_td\kappa_t\right]
   \leq
  \liminf\limits_{n\to\infty}\mathbb{E}\left[\int_0^{\infty}c^n_tZ_td\kappa_t\right]\leq 1.
\end{equation}
 Therefore by Lemma \ref{characterizationOfAdmissibleConsumptionProcess}, $c\in\mathcal
 A$, and thus $\mathcal{A}$ is closed.

\textit{(ii)} It follows from Lemma
\ref{characterizationOfAdmissibleConsumptionProcess} that
\begin{equation}\nonumber
  \mathcal{A} = \mathcal{Z}^o,
\end{equation}
whereas from Lemma \ref{4102} we deduce
\begin{equation}\label{11192}
  \mathcal{Y} \subseteq \mathcal{A}^o
  =\mathcal{Z}^{oo}.
\end{equation}
Since $\mathcal{Y}$ is closed, convex, and solid and
$\mathcal{Z}\subset \mathcal{Y},$ it follows from the
bipolar theorem of Brannath and Schachermayer that $\mathcal{Z}^{oo}\subseteq
\mathcal{Y}.$ Combining this with (\ref{11192}) we conclude
that
\begin{equation}\label{11194}
  \mathcal{Y} = \mathcal{A}^o.
\end{equation}
On the other hand it follows from part \textit{(i)} that $\mathcal{A}$
is also convex, closed and solid. Thus $\mathcal{A} =
\mathcal{A}^{oo}$ by the bipolar theorem. Therefore, from (\ref{11194})
we get
\begin{displaymath}
 \mathcal{A} =\mathcal{Y}^o.
\end{displaymath}

\textit{(iii)}
Since $\mathcal{X}$ contains a constant function $\mathbf{1} =
\left(1\right)_{t\geq 0}$, the existence of $c\in \mathcal{A},$ such that $c>0,$
follows from the definition of the set $\mathcal{A}$. The existence of $Y\in
\mathcal{Y},$ such that $Y>0,$ follows from assumption
(\ref{ZisNotEmpty}).  This completes the proof of
Proposition~\ref{prop1}.
\end{proof}

\section*{Acknowledgements}
This work is  part of the author's PhD Thesis. I
would like to thank Dmitry Kramkov for being a supportive adivisor, whose
suggestions and ideas helped in writing this paper. I would also like to thank
 Giovanni Leoni, Scott Robertson, and Pietro Siorpaes for the remarks and discussions.


\bibliographystyle{plainnat} \bibliography{finance}

\end{document}